\documentclass[acmsmall]{acmart}

%%
%% \BibTeX command to typeset BibTeX logo in the docs
\AtBeginDocument{%
  }

%%% The following is specific to POPL '25 and the paper
%%% 'Maximal Simplification of Polyhedral Reductions'
%%% by Louis Narmour, Tomofumi Yuki, and Sanjay Rajopadhye.
%%%
\setcopyright{cc}
\setcctype{by}
\acmJournal{PACMPL}
\acmYear{2025} 
\acmVolume{9} 
\acmNumber{POPL} 
\acmArticle{3} 
\acmMonth{1} 
\acmPrice{}
\acmDOI{10.1145/3704839}
% \setcopyright{acmlicensed}
% \acmDOI{10.1145/3704839}
% \acmYear{2025}
% \acmJournal{PACMPL}
% \acmVolume{9}
% \acmNumber{POPL}
% \acmArticle{3}
% \acmMonth{1}
\received{2024-07-11}
\received[accepted]{2024-11-07}

\usepackage{algorithm2e}
\usepackage{wrapfig}
\usepackage{subcaption}
\usepackage{amsmath}
\usepackage{accents}
\usepackage{listings}
% \usepackage{algorithmic}

% \lstdefinestyle{CStyle}{
%     basicstyle=\footnotesize,
%     keywordstyle=\bfseries,
%     morekeywords={assert, for, if, is, to},
%     frame=single
% }

\lstdefinestyle{CStyle}{
    basicstyle=\footnotesize,
    keywordstyle=\bfseries,
    morekeywords={assert, for, if, is, to, S0, S1, S2, S3, S4, C0, C1, C2, C3, C4, C5, inputs, outputs, locals, let, case, and, or, affine, external, reduce, max, min},
    moredelim=**[is][\color{red}]{@r@}{@r@},
    moredelim=**[is][\color{blue}]{@b@}{@b@},
    moredelim=**[is][\color{darkgreen}]{@g@}{@g@},
    moredelim=**[is][\color{gray}]{@k@}{@k@},
    moredelim=**[is][\color{white}]{@w@}{@w@},
    frame=single
}

\usepackage[colorinlistoftodos]{todonotes} %add disable in the params to remove all todos

\setlength{\marginparwidth}{2cm}

\newcommand{\halfvec}[1]{\accentset{\rightharpoonup}{#1}}

\RestyleAlgo{ruled}

\definecolor{darkgreen}{RGB}{0,153,0}

\begin{document}

\title{Maximal Simplification of Polyhedral Reductions}

\author{Louis Narmour}
\orcid{0009-0009-3298-5282}
\affiliation{%
  \institution{Colorado State University}
  \city{Fort Collins}
  \country{USA}
}
\affiliation{%
  \institution{University of Rennes, Inria, CNRS, IRISA}
  \city{Rennes}
  \country{France}
}
\email{louis.narmour@colostate.edu}

\author{Tomofumi Yuki}
\orcid{0000-0002-5737-6178}
\affiliation{%
  \institution{Unaffiliated}
  \city{Yokohama}
  \country{Japan}
}
\email{tomofumi.yuki@gmail.com}

\author{Sanjay Rajopadhye}
\orcid{0000-0002-4246-6066}
\affiliation{%
  \institution{Colorado State University}
  \city{Fort Collins}
  \country{USA}
}
\email{sanjay.rajopadhye@colostate.edu}

% \title{Maximal Simplification of Polyhedral Reductions}

% \author{Louis Narmour}
% \affiliation{
%   \department{Computer Science}
%   \institution{Colorado State University}
%   \city{Fort Collins}
%   \state{Colorado}
%   \country{USA}
% }
% \email{louis.narmour@colostate.edu}
% \affiliation{
%   \institution{University of Rennes, Inria, CNRS, IRISA}
%   \city{Rennes}
%   \country{France}
% }
% \email{louis.narmour@irisa.fr}

% \author{Tomofumi Yuki}
% \affiliation{
%   % \institution{University of Rennes, Inria}
%   % \city{Rennes}
%    \institution{Independent Contributor}
%    \country{Japan}
% }
% \email{tomofumi.yuki@gmail.com}
% %\country{}
% \author{Sanjay Rajopadhye}
% \affiliation{
%   \department{Computer Science}
%   \institution{Colorado State University}
%   \city{Fort Collins}
%   \state{Colorado}
%   \country{USA}
% }
% \email{sanjay.rajopadhye@colostate.edu}

%%
%% By default, the full list of authors will be used in the page
%% headers. Often, this list is too long, and will overlap
%% other information printed in the page headers. This command allows
%% the author to define a more concise list
%% of authors' names for this purpose.
% \renewcommand{\shortauthors}{Narmour et al.}

%%
%% The abstract is a short summary of the work to be presented in the
%% article.
\begin{abstract}
\emph{Reductions} combine collections of input values with an associative and often commutative operator to produce collections of results.
When the \emph{same} input value contributes to \emph{multiple} outputs, there is an opportunity to \emph{reuse} partial results, enabling \emph{reduction simplification}.
Simplification often produces a program with lower asymptotic complexity.
Typical compiler optimizations yield, at best, a constant fold speedup, but a complexity improvement from, say, cubic to quadratic complexity yields unbounded speedup for sufficiently large problems.
It is well known that reductions in polyhedral programs may be simplified \emph{automatically}, but previous methods cannot exploit all available reuse.  
This paper resolves this long-standing open problem, thereby attaining minimal asymptotic complexity in the simplified program.
We propose extensions to prior work on simplification to support any independent commutative reduction.
At the heart of our approach is piece-wise simplification, the notion that we can split an arbitrary reduction into pieces and then independently simplify each piece.
However, the difficulty of using such piece-wise transformations is that they typically involve an infinite number of choices.
We give constructive proofs to deal with this and select a finite number of pieces for simplification.

\end{abstract}

%%
%% The code below is generated by the tool at http://dl.acm.org/ccs.cfm.
%% Please copy and paste the code instead of the example below.
%%
\begin{CCSXML}
<ccs2012>
   <concept>
       <concept_id>10003752.10003809.10011254.10011258</concept_id>
       <concept_desc>Theory of computation~Dynamic programming</concept_desc>
       <concept_significance>500</concept_significance>
       </concept>
   <concept>
       <concept_id>10003752.10010124.10010131.10010132</concept_id>
       <concept_desc>Theory of computation~Algebraic semantics</concept_desc>
       <concept_significance>500</concept_significance>
       </concept>
   <concept>
       <concept_id>10003752.10010124.10010138.10010143</concept_id>
       <concept_desc>Theory of computation~Program analysis</concept_desc>
       <concept_significance>500</concept_significance>
       </concept>
   <concept>
       <concept_id>10003752.10010124.10010138.10011119</concept_id>
       <concept_desc>Theory of computation~Abstraction</concept_desc>
       <concept_significance>500</concept_significance>
       </concept>
 </ccs2012>
\end{CCSXML}

\ccsdesc[500]{Theory of computation~Dynamic programming}
\ccsdesc[500]{Theory of computation~Algebraic semantics}
\ccsdesc[500]{Theory of computation~Program analysis}
\ccsdesc[500]{Theory of computation~Abstraction}
% End of copy/pasted section.

\keywords{polyhedral compilation, algorithmic complexity, program transformation}

% \received{20 February 2007}
% \received[revised]{12 March 2009}
% \received[accepted]{5 June 2009}

\maketitle

\section{Introduction}

Computing technology has become increasingly powerful and complex over the years, offering more capability with each new generation of processors.
For example, the latest generation of Intel Xeon processors supports configurations of up to 50 cores on a single die with 100 MB of last-level cache.
% \footnote{\url{https://www.intel.com/content/www/us/en/products/sku/231750/intel-xeon-platinum-8468h-processor-105m-cache-2-10-ghz/specifications.html}}
However, using all the available processing power in the presence of complex data dependencies is not always easy.
Relying on traditional compilers to produce high-performance code for a given input program is often insufficient.
One reason is that general-purpose language compilers must be very conservative in the types of optimizations that can be employed.
Consequently, the onus is on the application developer to write the program in such a way that the compiler can successfully detect optimization opportunities.
This is challenging because it is often easier to think about problems from a higher level of abstraction, whereas writing efficient code requires lower-level reasoning.
Over the years, this has led to the development of a wide variety of Domain Specific Languages (DSLs) and highly specialized frameworks.

The polyhedral model~\cite{rajopadhye_synthesizing_1989, feautrier_dataflow_1991, feautrier_efficient_1992, amarasinghe_communication_1993, fortes_data_1984, irigoin_supernode_1988, lam_systolic_1989, lengauer_loop_1993, pugh_omega_1991, quinton_mapping_1989, ramanujam_nested_1990, schreiber_automatic_1990, wolf_loop_1991, wolf_data_1991, wolfe_iteration_1987} is one such framework and is a mathematical formalism for specifying, analyzing, and transforming compute- and data-intensive programs.
Such programs occur in a wide variety of application domains, like dense linear algebra, signal and image processing, convolutional neural nets, deep learning, back-propagation training, and dynamic programming, to name just a few.
% The Multi-Level Intermediate Representation~\cite{MLIR-arXiv2020, MLIR-CGO2021} (\href{https://mlir.llvm.org}{MLIR}) in the \href{https://www.llvm.org}{LLVM Compiler Infrastructure} is largely based on this model.
In this paper, we study the optimization of programs that can be specified by reductions within the polyhedral model.
Reductions are ubiquitous in computing and typically involve applying an associative, often commutative, operator to collections of inputs to produce one or more results.
Such operations are interesting because they often require special handling to obtain good performance.
The OpenMP C/C++ multithreading API~\cite{openmp_architecture_review_board_openmp_2021} even has directives tailored specifically for parallelizing %loops as
reductions.
%Typical compiler optimizations yield, at best, a constant fold speedup.
However, in some cases, the input program specification may involve \textit{reuse}---the same value contributing to multiple results.
When properly exploited, it is possible to improve the asymptotic complexity of such programs.
A complexity improvement from, say, $O(N^3)$ to $O(N^2)$ yields unbounded speedup since, asymptotically, $N$ can be arbitrarily large.
%Optimizations of this type have traditionally relied on very clever and manual human analysis and engineering effort.

\begin{wrapfigure}[18]{tr}{0.5\textwidth}
    \vspace{-3mm}
    \includegraphics[width=0.5\textwidth]{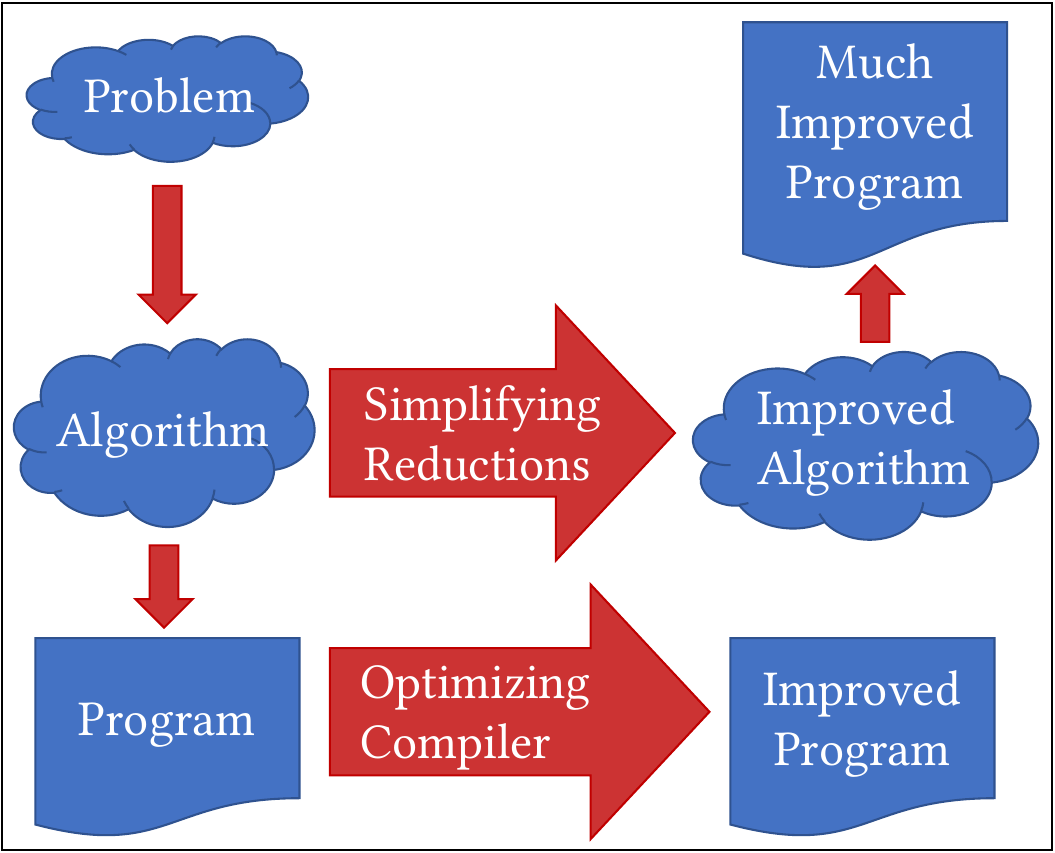}
    \caption{Simplification improves the asymptotic complexity of the \emph{algorithm}.}
    \label{fig:complexity}
\end{wrapfigure}

Gautam and Rajopadhye~\cite{gautam_simplifying_2006} previously showed how to reduce, by such polynomial degrees, the asymptotic complexity of commutative polyhedral reductions.
They developed a program transformation called \emph{simplification} and outlined a recursive algorithm to automatically simplify polyhedral reductions.  
Their work (henceforth referred to as GR06) is optimal within its scope of applicability.
% but not \textit{maximal}, in the sense that it cannot handle all possible scenarios of reuse.
% However, it may encounter scenarios where all reuse is not exploited when it. But, it is optimal in the sense that within the space of all possible transforms (simplifications w/o splits), it finds the lowest complexity program. Our work here, augments GR06 by also allowing splits to enable maximal exploitation of all available reuse.
The simplification algorithm may fail when the reduction operator does not admit an inverse.
Such operators are very common in many dynamic programming algorithms.
Indeed, polyadic dynamic programming problems~\cite{li_systolic_1985} are nothing but reductions and are widely used in many bio-informatics algorithms~\cite{lyngso_fast_1999, nussinov_fast_1980, zuker_optimal_1981, wolfinger_efficient_2004, flamm_rna_2000, zadeh_nupack_2011, chitsaz_birna_2009,lorenz_viennarna_2011, mathews_prediction_2006, boniecki_simrna_2016, bringmann_truly_2019, lorenz_rna_2016, huang_linearfold_2019}.

This paper proposes a method to extend the simplification algorithm to handle all scenarios, achieving maximal simplification.
At the heart of our approach is \textit{piece-wise simplification}, the notion that we can split the problem into a number of pieces and then independently simplify each piece.
Such \emph{piece-wise affine transformations}, also known as index-set splitting, have a long history in other polyhedral analyses~\cite{rajopadhye_piecewise_1992, griebl_index_2000, bondhugula_tiling_2014, vasilache_automatic_2007, razanajato_splitting_2017}.
The difficulty lies in the fact that, in general, there are infinitely many ways to split a polyhedron. 
We give constructive proofs showing how to select a finite number of pieces for simplification.
In doing so, we make the following contributions,
\begin{enumerate}
    \item We propose extensions to the GR06 simplification algorithm to support any arbitrary independent reduction, particularly when the operator does not admit an inverse.
    % \item This work closes the gap in our knowledge of the computational complexity of evaluating arbitrary affine reductions.
    \item We provide an implementation of our approach as an accompanying software artifact~\cite{narmour_maximal_2024}.
\end{enumerate}

The remainder of this paper is organized as follows.
Section~\ref{sec:motivating-examples} provides several motivating examples of progressively increasing difficulty.
In Section~\ref{sec:background}, we review background material on the polyhedral model and prior work to explain the limitations of simplification.
In Sections~\ref{sec:extending-simplification} and~\ref{sec:problem-formulation}, we formulate our approach and justify it in Sections~\ref{sec:dD-reductions} and~\ref{sec:2D-reductions (triangles)}.
We discuss aspects of our implementation in Section~\ref{sec:implementation}.
Finally, we review related work in Section~\ref{sec:related-work} and conclude in Section~\ref{sec:conclusion}.

\section{Motivating Examples} \label{sec:motivating-examples}

We now show several simple examples to illustrate the intuition of simplification and its limitations.

\subsection{Prefix Sum} \label{sec:motiv-prefix-sum}

Consider the prefix sum, which computes an N-element output array from an N-element input array where the $i$'th element of the output is the sum of the first $i$ elements of the input:
\begin{equation} \label{eq:motiv-prefix-sum}
    Y_{i} = \sum_{j=0}^{j \leq i} X_{j}
\end{equation}
As written, Equation~\ref{eq:motiv-prefix-sum} has an asymptotic complexity of $O(N^2)$ because each of the N elements in the output involves a summation over $O(N)$ input values.
However, two adjacent elements in the output involve summing many of the \emph{same elements} (i.e., there is \emph{reuse}), which means there is an opportunity to reuse partial results.

\begin{figure}[h]
    \centering
    \begin{subfigure}{0.4\textwidth}
        \centering
        \includegraphics[width=\textwidth]{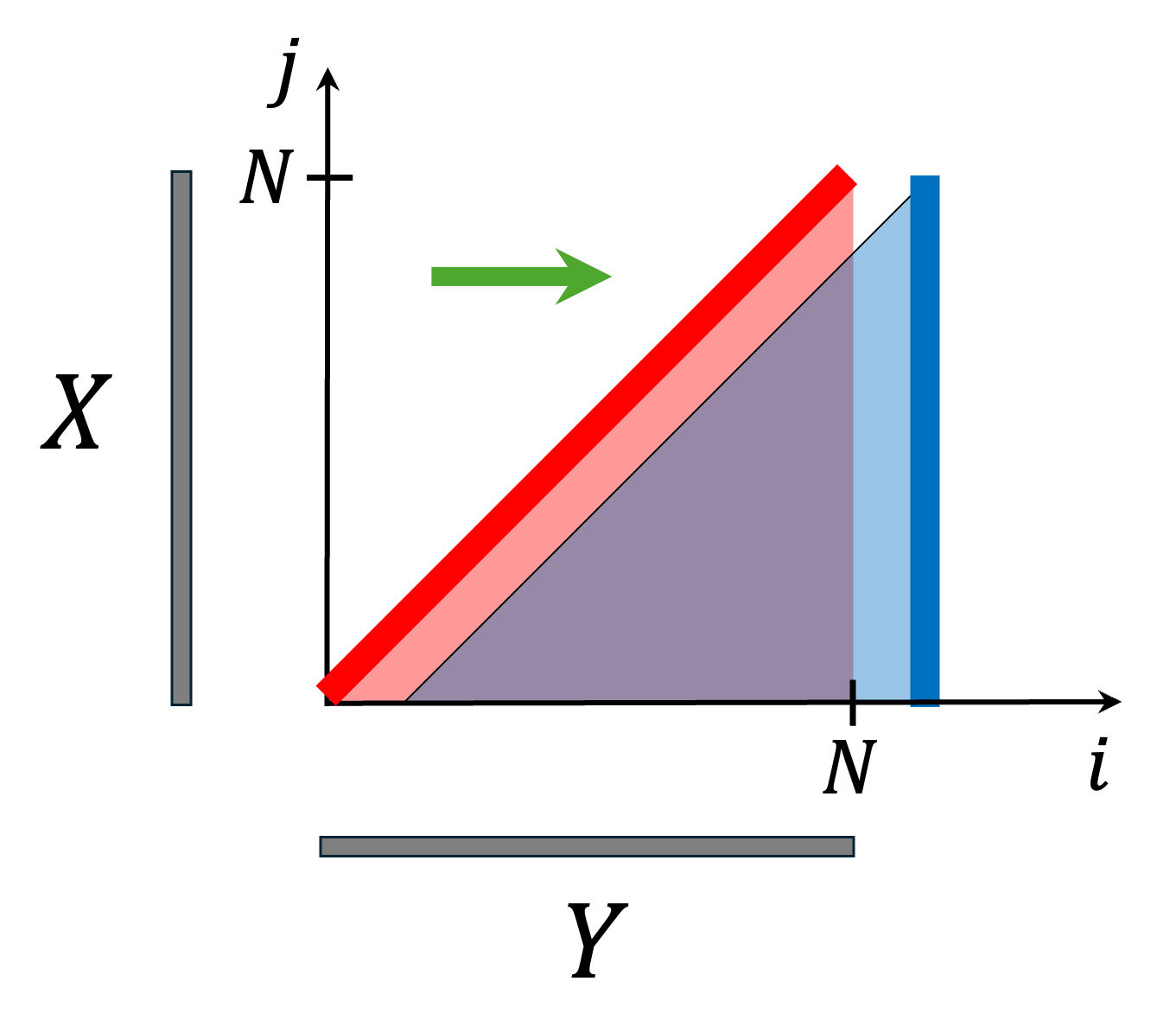}
        \caption{Equation~\ref{eq:motiv-prefix-sum-simplified-v1} expressing $Y_{i}$ in terms of $Y_{i-1}$.}
    \end{subfigure}%
    \hspace{5mm}
    \begin{subfigure}{0.4\textwidth}
        \centering
        \includegraphics[width=\textwidth]{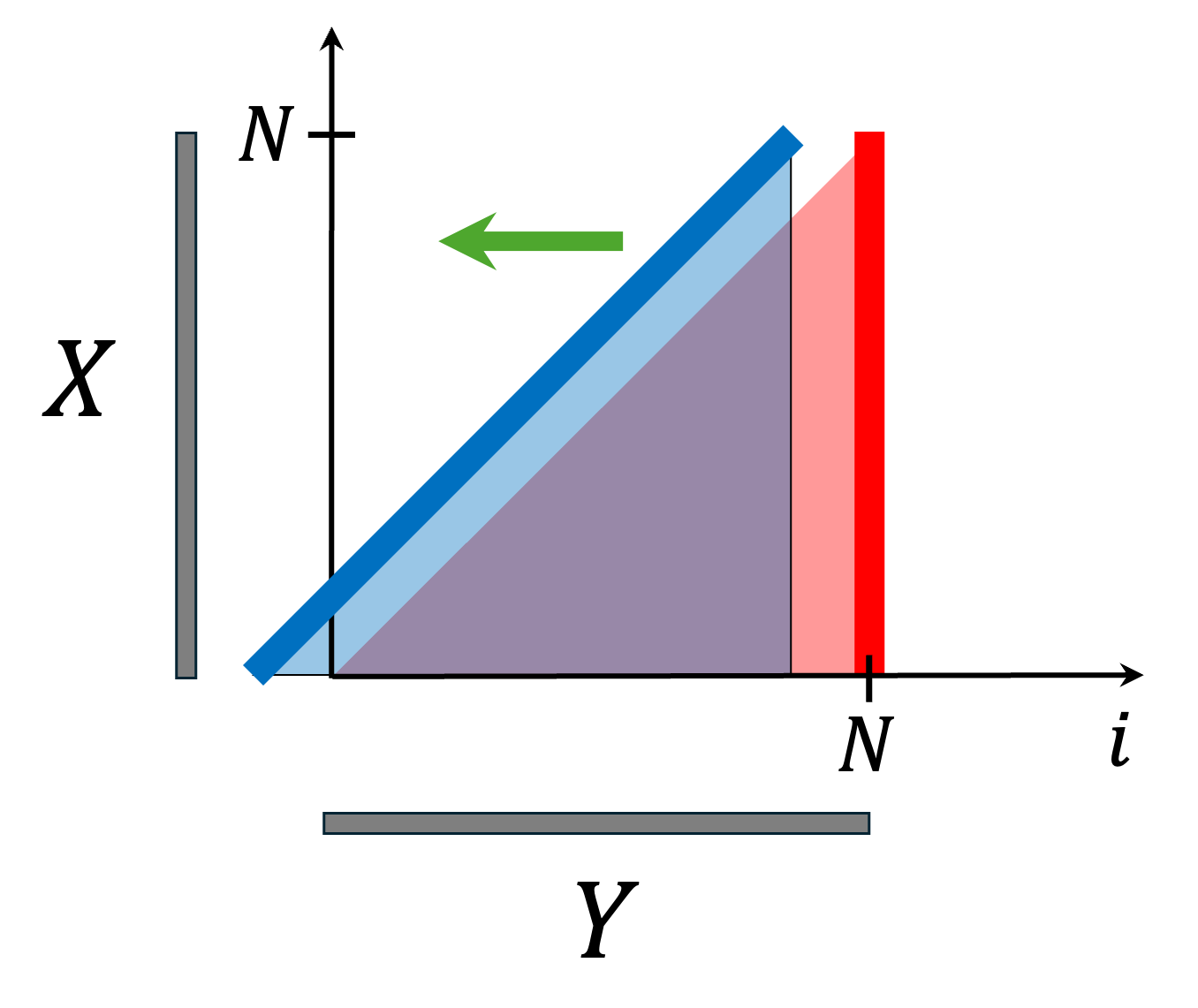}
        \caption{Equation~\ref{eq:motiv-prefix-sum-simplified-v2} expressing $Y_{i}$ in terms of $Y_{i+1}$.}
    \end{subfigure}
    \caption{Reduction from quadratic to linear asymptotic complexity. The input variable $X$ is oriented along the vertical axis, and the output $Y$ is oriented along the horizontal.}
    
    \Description{Prefix sum}
    \label{fig:prefix-sum}
\end{figure}

For example, $Y_{k} = X_{0} + X_{1} + ... + X_{k}$ and 
$Y_{k+1} = X_{0} + X_{1} + ... + X_{k} + X_{k+1}$.
Once $Y_{k}$ is computed, there is no need to re-sum the first $k$ elements of $X$ to compute $Y_{k+1}$.
Instead, $Y_{k+1}$ can be expressed in terms of $Y_{k}$ and only the \emph{new} values of $X$:
\begin{align*}
    Y_{k+1} &= (X_{0} + X_{1} + X_{2} + ... + X_{k}) + X_{k+1} \\
    Y_{k+1} &= Y_{k} + X_{k+1}
\end{align*}
This observation allows Equation~\ref{eq:motiv-prefix-sum} to be rewritten as either:
\begin{equation} \label{eq:motiv-prefix-sum-simplified-v1}
    Y_{i} = \left\{
        \begin{array}{lcl} 
            i=0 &:& \textcolor{red}{X_{0}} \\[1mm]
            i > 0 &:& \textcolor{darkgreen}{Y_{i-1}} + \textcolor{red}{X_{i}}
        \end{array} \right.
\end{equation}
where $Y_{i}$ is computed from $Y_{i-1}$, or equally validly, as: 
\begin{equation} \label{eq:motiv-prefix-sum-simplified-v2}
    Y_{i} = \left\{
        \begin{array}{lcl} 
            i=N &:&\displaystyle \textcolor{red}{\sum_{j=0}^{N} X_{j}} \\[2mm]
            i < N &:& \textcolor{darkgreen}{Y_{i+1}} - \textcolor{blue}{X_{i+1}}
        \end{array} \right.
\end{equation}
where $Y_{i}$ is computed from $Y_{i+1}$ instead.
The terms are colored to match the corresponding edges in Figure~\ref{fig:prefix-sum}.
Consequently, both have a smaller, better asymptotic complexity of $O(N)$.
The latter requires an initial summation of all N elements of $X$ to produce $Y_{N}$, but the overall complexity is still linear because the remaining $N-1$ elements are each computed in constant time.

The arrows in Figure~\ref{fig:prefix-sum} reflect the choice of expressing $Y_{i}$ as either $Y_{i-1}$ or $Y_{i+1}$, which has the effect of moving the computation from the 2D domain (triangle) to some of the 1D edges.
The correspondance between Figure~\ref{fig:prefix-sum} and Equations~\ref{eq:motiv-prefix-sum-simplified-v1} and~\ref{eq:motiv-prefix-sum-simplified-v2} will be made more explicit in Section~\ref{sec:background}. 
The goal of GR06's simplification is to explore all such possible rewrites and find the ones with the optimal asymptotic complexity in the presence of reuse.
However, as illustrated by the next example, some rewrites may not be possible.

\subsection{Prefix Max} \label{sec:motiv-prefix-max}

Consider the prefix max, which is identical to the previous example, except it uses the max operator instead of addition:
\begin{equation} \label{eq:motiv-prefix-max}
    Y_{i} = \max_{j=0}^{j \leq i} X_{j}
\end{equation}
Everything else is the same, the value produced at $Y_{i}$ can be used to compute the next value at $Y_{i+1}$, allowing it to be rewritten with $O(N)$ complexity as:
\begin{equation} \label{eq:motiv-prefix-max-simplified-v1}
    Y_{i} = \left\{
        \begin{array}{lcl} 
            i=0 &:& X_{0} \\[1mm]
            i > 0 &:& \mathrm{max}(Y_{i-1}, X_{i})
        \end{array} \right.
\end{equation}
exactly like Equation~\ref{eq:motiv-prefix-sum-simplified-v1}.
However, the key difference here is that only one of the rewrites is possible.
There is no way to express an equation analogous to Equation~\ref{eq:motiv-prefix-sum-simplified-v2} because the max operator does not admit an inverse.
This is not an issue here since at least one rewrite does not involve the inverse operation, and thus GR06 can find it.
However, in general, a rewrite may not exist that does not involve the inverse operation, and consequently, GR06 may fail, as illustrated by the next example.

\subsection{Sliding and Increasing Max Filter} \label{sec:motiv-i-to-2i}

Consider the following equation, which computes an $N$-element output array from an $N$-element input array where the $i$'th element of the output is the max over the sliding, and increasing, window from $i$ to $2i$ on the input: 
\begin{equation} \label{eq:motiv-i-to-2i}
    Y_{i} = \max_{j=i}^{j \leq 2i} X_{j}
\end{equation}
As written, this has an asymptotic complexity of $O(N^2)$.
Like the previous examples, there is also reuse.
Across two adjacent answers in the output, \textcolor{darkgreen}{$O(N)$ of the same inputs} are read:
\begin{align*}
    Y_{k} &= \mathrm{max}(\textcolor{blue}{X_{k}}, \textcolor{darkgreen}{X_{k+1}, X_{k+2},  ..., X_{2k}}) \\
    Y_{k+1} &= \mathrm{max}(\hspace{5mm} \textcolor{darkgreen}{X_{k+1}, X_{k+2},  ..., X_{2k}}, \textcolor{red}{X_{2k+1}, X_{2k+2}}) \\
\end{align*}
But, neither can $Y_{k+1}$ be expressed in terms of $Y_{k}$ nor can $Y_{k}$ be expressed in terms of $Y_{k+1}$ because this would require \emph{removing} the contributions of either \textcolor{blue}{$X_{k}$} or \textcolor{red}{$X_{2k+1}$} and \textcolor{red}{$X_{2k+2}$} which is not possible because the max operator has no inverse.

However, it is still possible to rewrite Equation~\ref{eq:motiv-i-to-2i} with an asymptotic complexity of $O(N)$.
The solution is to split this reduction into three pieces by the hyperplanes $2i=N$ and $j=N$.
Since the reduction is commutative, the order of accumulation does not matter, and as we discuss in Section~\ref{sec:split-reduction}, Equation~\ref{eq:motiv-i-to-2i} may be rewritten as,
\begin{align}
    Y_{i} = \left\{
        \begin{array}{lcl} 
            2i \geq N &:& \displaystyle \max \bigg( \Big(\max_{j=i}^{j < N} X_{j}\Big), \Big( \max_{j=N}^{j \leq 2i} X_{j} \Big) \bigg)  \\[1mm]
            2i<N &:&\displaystyle \max_{j=i}^{j \leq 2i} X_{j}
        \end{array} \right. 
\end{align}
In this form, we can see three pieces.
The two reductions in the top branch $2i \geq N$ are both instances of a standard suffix max and prefix max in the previous section,~\ref{sec:motiv-prefix-max}.
The reduction in the second branch is the same reduction as Equation~\ref{eq:motiv-i-to-2i} but just over a smaller domain.
This leads to a recursive simplification strategy, which can be further and similarly split into the same number of pieces.
We formulate this more precisely in Section~\ref{sec:2D-reductions (triangles)}, and the rest of this work is dedicated to generalizing this idea to arbitrary input reductions.

\section{Background} \label{sec:background}

In this section, we summarize the simplification transformation of GR06 with an example.

\subsection{Terminology and Notation} 

Generally, simplification operates on computations of the following form:
\begin{eqnarray}
    Y_{f_{p}(z)}
    &=& \bigoplus_{z \in \mathcal{D}} X_{f_{d}(z)} \label{eq:reduction-def}
\end{eqnarray}
where $Y$ is an output variable, and $f_{p}$ and $f_{d}$ are affine functions.  
The \emph{body} is an arbitrary expression involving other variables in the program, provided they do not depend recursively, even transitively, on any instance of $Y$.
Without loss of generality (our tools handle arbitrary expressions), we abstract it as an input variable $X$.  We use the terminology of GR06, summarized below:

\begin{itemize}
    \item \textit{Polyhedron}: A set of integer points defined by a finite list of inequality and equality constraints.
    \item \textit{Reduction body} ($\mathcal{D}$): A $d$-dimensional polyhedron representing the values of the program variable indices involved in the reduction's accumulation.
    \item \textit{Facet (or face)}: A $k$-dimensional face of the reduction body described uniquely by a subset of its inequality constraints treated as equalities.
    \item \textit{Face lattice}: The hierarchical arrangement of faces of the reduction body, see Section~\ref{sec:background-face-lattice}.
    \item \textit{Write function} ($f_p$): A rank-deficient affine map from $\mathbb{Z}^{d} \rightarrow \mathbb{Z}^{d-a}$ defining to which element of the output each point in the reduction body accumulates.\footnote{Rank-deficiency implies that $a > 0$.}
    \item \textit{Accumulation space} ($\mathcal{A}$): The $a$-dimensional space characterized by the null space of the write function.
    \item \textit{Read function} ($f_d$): A (potentially rank-deficient) affine map from $\mathbb{Z}^{d} \rightarrow \mathbb{Z}^{d-r}$ characterizing from which element of the input each point in the reduction body reads.
    \item \textit{Reuse space} ($\mathcal{R}$): The $r$-dimensional space characterized by the null space of the read function.
    \item \textit{Reuse vector ($\halfvec{\rho}$)}: Any vector in the reuse space.
\end{itemize}

Such programs consist of arbitrarily nested loops with affine control and a single statement in the body, plus initialization statements as appropriate.
Each statement accumulates values into some element of an output array using the reduction operator $\oplus$.
The right-hand sides of the statements are arbitrary expressions evaluated in constant time, reading other array variables via affine access functions.
Our reduction operators are associative and commutative, so the execution order of the loops is irrelevant.
Every instance of the loop body (for every legal value of the surrounding indices) can be treated as a new dummy variable.
Simplification is possible when the reuse space is non-empty (i.e., the read function, $f_{d}$, is rank-deficient).

\subsection{Face Lattice} \label{sec:background-face-lattice}

The face lattice~\cite{loechner_parameterized_1997} is an important data structure for simplification.
The face lattice of a polyhedron $\mathcal{D}$  is a graph whose nodes are the \emph{facets} of $\mathcal{D}$.
Each face in the lattice is the intersection of $\mathcal D$ with one or more \emph{equalities} of the form $\alpha z+\gamma = 0$ for $z \in \mathcal{D}$ obtained by \emph{saturating} one or more of the inequality constraints in $\mathcal{D}$.
We refer to each $k$-dimensional face as a \emph{($k$)-face}.
More than one constraint may be saturated to yield recursively, facets of facets, or \emph{faces}.
In the lattice, faces are arranged level by level, and each face saturates exactly one constraint in addition to those saturated by its immediate ancestors.

\begin{figure}[tbh]
    \centering
    \includegraphics[width=0.8\textwidth]{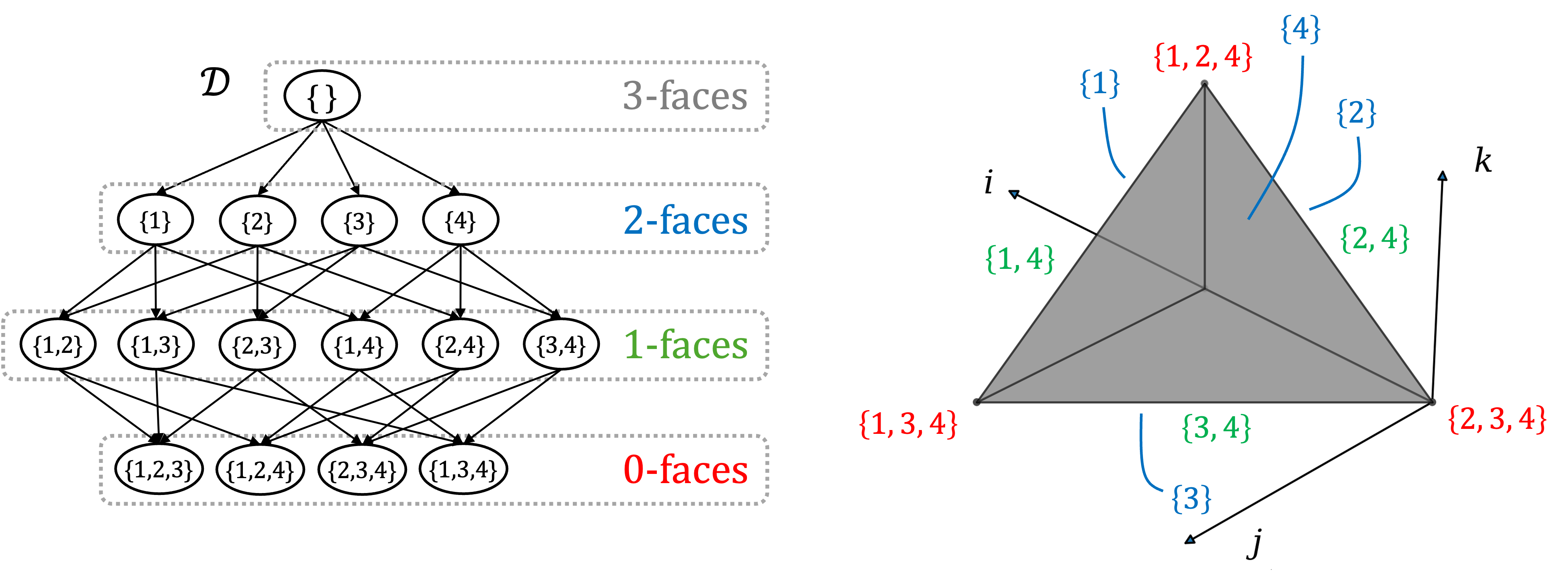}
    \caption{Face lattice of $\mathcal{D}$ in Equation~\ref{eq:sr-example}. The numbers inside the labels denote the constraints that, when saturated, describe the face. For example, ``$\{4\}$'' denotes saturating constraint $c_{4} : k \leq i-j$, which represents the top oblique 2-face. The back three edges (1-faces) and vertex (0-face) are not labeled.}
    \Description{face lattice}
    \label{fig:face-lattice}
\end{figure}

% \subsubsection{Thick and Extended Faces}

% GR06 introduces the notion of a thick face and what they call an \textit{effectively saturated constraint}.
% An effectively saturated constraint is either a single equality constraint or a pair of parallel inequality constraints separated by a constant, which should be viewed as a single \emph{thick equality} constraint.
% % Relative to an ancestor, each face is described by one additional such constraint.
% For example, consider the following domains: $\mathcal{D}_{1} = \{ [i,j] \mid (0 \leq i < N) \land (0 \leq j < N) \} $ and $\mathcal{D}_{2} = \{ [i,j] \mid (0 \leq i < 10) \land (0 \leq j < N) \}$.
% The cardinalities of $\mathcal{D}_{1}$ and $\mathcal{D}_{2}$ are polynomials of the size parameter $N$.
% There are $N^{2}$ points in $\mathcal{D}_{1}$ and there are $10N$ points in $\mathcal{D}_{2}$.
% We say that $\mathcal{D}_{1}$ is 2-dimensional because the cardinality of $\mathcal{D}_{1}$ is quadratic in $N$.
% Similarly, $\mathcal{D}_{2}$ is said to be 1-dimensional because its cardinality is linear in $N$, and we say that the constraints $i \leq 0$ and $i < 10$ form a single effectively saturated constraint.
% The dimensionality of an arbitrary polyhedron $\mathcal{D}$ is said to be its number of indices less its number of effectively saturated constraints.
% Thick faces appear in the simplification of Equation~\ref{eq:motiv-i-to-2i}, discussed in Section~\ref{sec:solution-2d-i-to-2i}.

\subsection{Working Example} \label{sec:background-working-ex}

Consider the reduction which produces an $O(N)$-element array specified by the following equation: %\todo{Would it be more useful to not give lower and upper bounds to the summation, but give a polyhedral set and just say that i,j belong to it?}
\begin{equation}
  \label{eq:sr-example}
   Y_{i} = \sum_{(i,j,k) \in \mathcal{D}} X_{k}
\end{equation}
over the 3-dimensional domain:
\begin{equation}
    \mathcal{D} = \{ [i,j,k] \mid (i <= N) \land (0 \leq j) \land (0 \leq k) \land (k \leq i-j) \}
\end{equation}
where $i$, $j$, and $k$ are indices and $N$ is a parametric size parameter.
The write function here is $f_{p} = \{[i,j,k] \rightarrow [i] \}$ and therefore the accumulation space is the 2-dimensional $jk$-plane, $\mathcal{A} = \{[i,j,k] \mid i=0\}$.
Similarly, the read function here is $f_{d} = \{[i,j,k] \rightarrow [k] \}$ and therefore the reuse space is the 2-dimensional $ij$-plane, $\mathcal{R} = \{[i,j,k] \mid k=0\}$.
% Without any analysis, the naive implementation of Equation~\ref{eq:sr-example} has an asymptotic complexity of $O(N^{3})$, because for each of the $O(N)$ values of the index $i$, the summation involves looping over $O(N^{2})$ corresponding values of $j$ and $k$.
% However, this reduction has 2-dimensional reuse because $\mathcal{A}$ is 2-dimensional, which means it is possible to rewrite Equation~\ref{eq:sr-example} so that the same result can be computed in linear $O(N^{3-2})$ time.
This set has four inequality constraints: $c_{1}$ ($i \leq N$), $c_{2}$ ($0 \leq j$), $c_{3}$ ($0 \leq k$), and $c_{4}$ ($k \leq i-j$).
Geometrically, the shape of $\mathcal{D}$ is a tetrahedron with 4 faces (2-faces), 6 edges (1-faces), and 4 vertices (0-faces).
The face lattice of $\mathcal{D}$ is illustrated in Figure~\ref{fig:face-lattice}.
The numbers in each node in the lattice denote which constraints are saturated.
For example, the 2-face labeled ``$\{1\}$'' represents the 2-face obtained by saturating $c_{1}$ (i.e., the triangular face at $i=N$).
Each edge in the lattice can be thought of as saturating one additional constraint.

\subsection{Single-Step Simplification} \label{sec:background-single-step-simplification}
\begin{figure}[tbh]

    \centering
    \begin{subfigure}{0.35\textwidth}
        \centering
        \includegraphics[width=\textwidth]{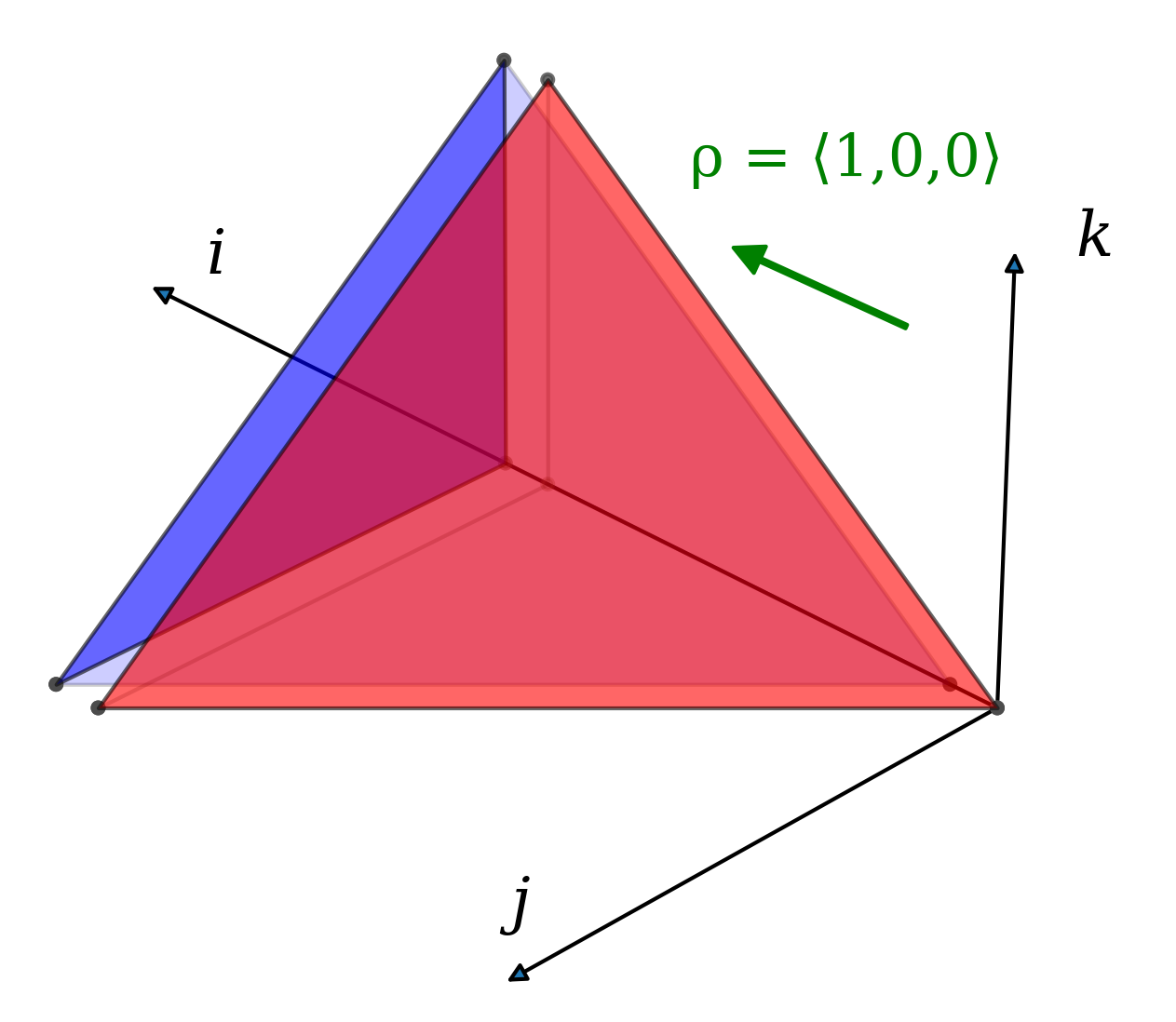}
        \caption{Translate $\mathcal{D}$ (shift red pyramid) by $\halfvec{\rho}$ (green arrow) yielding $\mathcal{D}_s$ (blue pyramid).}
    \end{subfigure}%
    \hspace{5mm}
    \begin{subfigure}{0.4\textwidth}
        \centering
        \includegraphics[width=\textwidth]{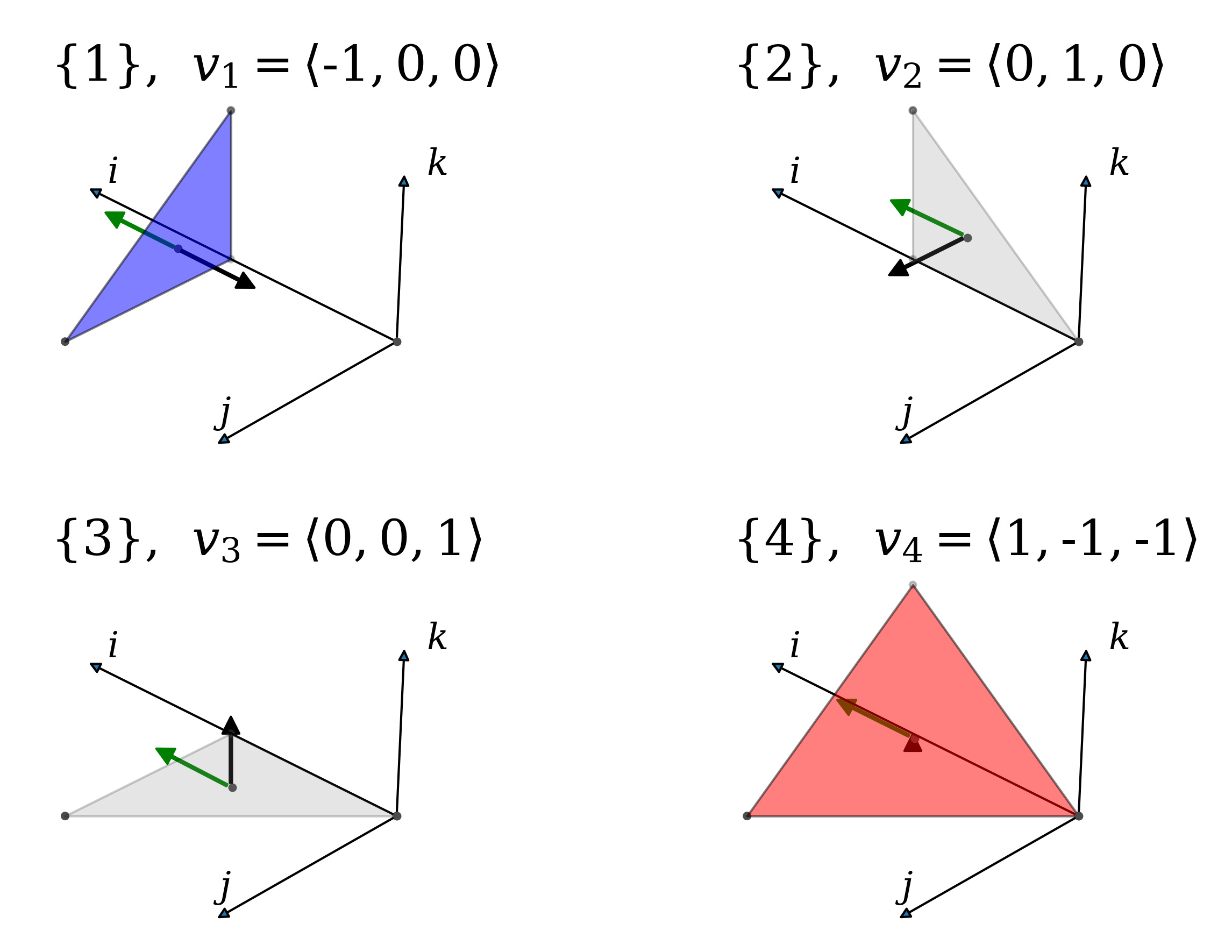}
        \caption{Evaluate the residual computation on only the facets of $\mathcal{D}$ and/or $\mathcal{D}_s$.}
    \end{subfigure}
    \caption{Simplification of a cubic equation (computation defined over a tetrahedron) to a quadratic complexity (the residual computation is defined only over the points in the top red triangular face).}
    
    \Description{Cubic to quadratic complexity.}
    \label{fig:sr-example}
\end{figure}
% \begin{figure}[tbh]
%     \centering
%     \includegraphics[width=1.0\textwidth]{img/labeling.png}
%     \caption{Simplification of a cubic equation (computation defined over a tetrahedron) to a quadratic complexity (the residual computation is defined only over the points in the top red triangular face). The normal vectors of the 4 2-faces are shown on the right.}
%     \Description{Simplification cubic to quadratic.}
%     \label{fig:sr-example}
% \end{figure}

Our working example involves a reduction with the addition operator.
The $i$'th result, $Y_{i}$, is the accumulation of the values of the reduction body at points in the $i$'th triangular slice of the tetrahedron.
The complexity of the equation is the number of integer points in
$\mathcal{D}$, namely $O(N^3)$.
Simplification is possible because the body has redundancy along any vector $\halfvec{\rho} \in \mathcal{R}$, such as $\halfvec{\rho} = \langle 1, 0, 0 \rangle$ (green arrow) shown in Figure~\ref{fig:sr-example}.
Any two points $[i, j, k], [i', j', k'] \in \mathcal{D}$ separated by a scalar multiple of $\halfvec{\rho}$ read the same value of $X$ because $X_{k} = X_{k'}$.
In other words, the body expression evaluates to the same value at all points along $\langle 1, 0, 0 \rangle$.  
Simplification exploits this reuse to read and compare only the $O(N^2)$ distinct values.
A geometric explanation of simplification is: 
\begin{enumerate}
    \item  Translate $\mathcal{D}$ (shift the red pyramid) by $\halfvec{\rho}$ (green arrow) yielding $\mathcal{D}_s$ (the blue pyramid).
    \item Delete all computations in the intersection of the two.
    \item Evaluate the residual computation on only (a subset of) the facets (here, 2-faces) of $\mathcal{D}$ and/or $\mathcal{D}_s$.
\end{enumerate}

Additionally, some of these facets can be ignored.
This is a \textbf{\textit{critical}} component of how we will construct splits and is discussed further in Sections~\ref{sec:boundary-facets} and~\ref{sec:invariant-facets}.
For now, note that the grey triangular 2-faces ``$\{2\}$'' and ``$\{3\}$'', whose normal vectors are orthogonal to $\halfvec{\rho}$, were already included in the intersection, and the back blue 2-face ``$\{1\}$'' one at $i=N+1$ is \emph{external} since it does not contribute to any answer.
This leaves a residual computation on only the top red oblique 2-face ``$\{4\}$''.
Thus, the $O(N^3)$ computation in Equation~\ref{eq:sr-example} can be rewritten as the following $O(N^2)$ equation:
\begin{equation} \label{eq:sr1}
Y_{i} = \left\{
        \begin{array}{lcl} 
            i=0 &:&\displaystyle \sum_{j=0}^{j \leq i} X_{i-j} \\[1mm]
            i > 0 &:& \displaystyle Y_{i-1} + \sum_{j=0}^{j \leq i} X_{i-j}
        \end{array} \right. 
\end{equation}
obtained from the application of Theorem 5 from GR06, which we do not review in detail here.
The intuition is that simplifying Equation~\ref{eq:sr-example} along $\halfvec{\rho} = \langle 1,0,0 \rangle$ \emph{moves} the computation to the ``\{4\}'' face, which is described by the saturated constraint $k=i-j$.
Thus the subscript, $k$, on $X$ from Equation~\ref{eq:sr-example} is expressed as $X_{i-j}$ in Equation~\ref{eq:sr1}.

\subsection{Equivalence Partitioning of Infinite Choices} \label{sec:background-equivalence-classes}

In this example, the reuse space is multi-dimensional, which means there are infinitely many choices of reuse along which simplification can be performed.
However, not all choices of reuse need to be explored.
The intuition is that any two reuse vectors resulting in the same combination of residual computation can be viewed as the same candidate choice.
In the previous section, the choice of $\halfvec{\rho} = \langle 1,0,0 \rangle$ resulted in a residual computation on only the top oblique 2-face.
GR06 refers to the subset of $\halfvec{\rho}$ that results in the same residual computation as an \textit{equivalance class}.
Then, the set of all equivalence classes can be explored with dynamic programming.
This guarantees the final simplified program's optimality as long as a single reuse vector from each equivalence class is considered.

\begin{figure}[h]
    \centering
    \includegraphics[width=0.8\textwidth]{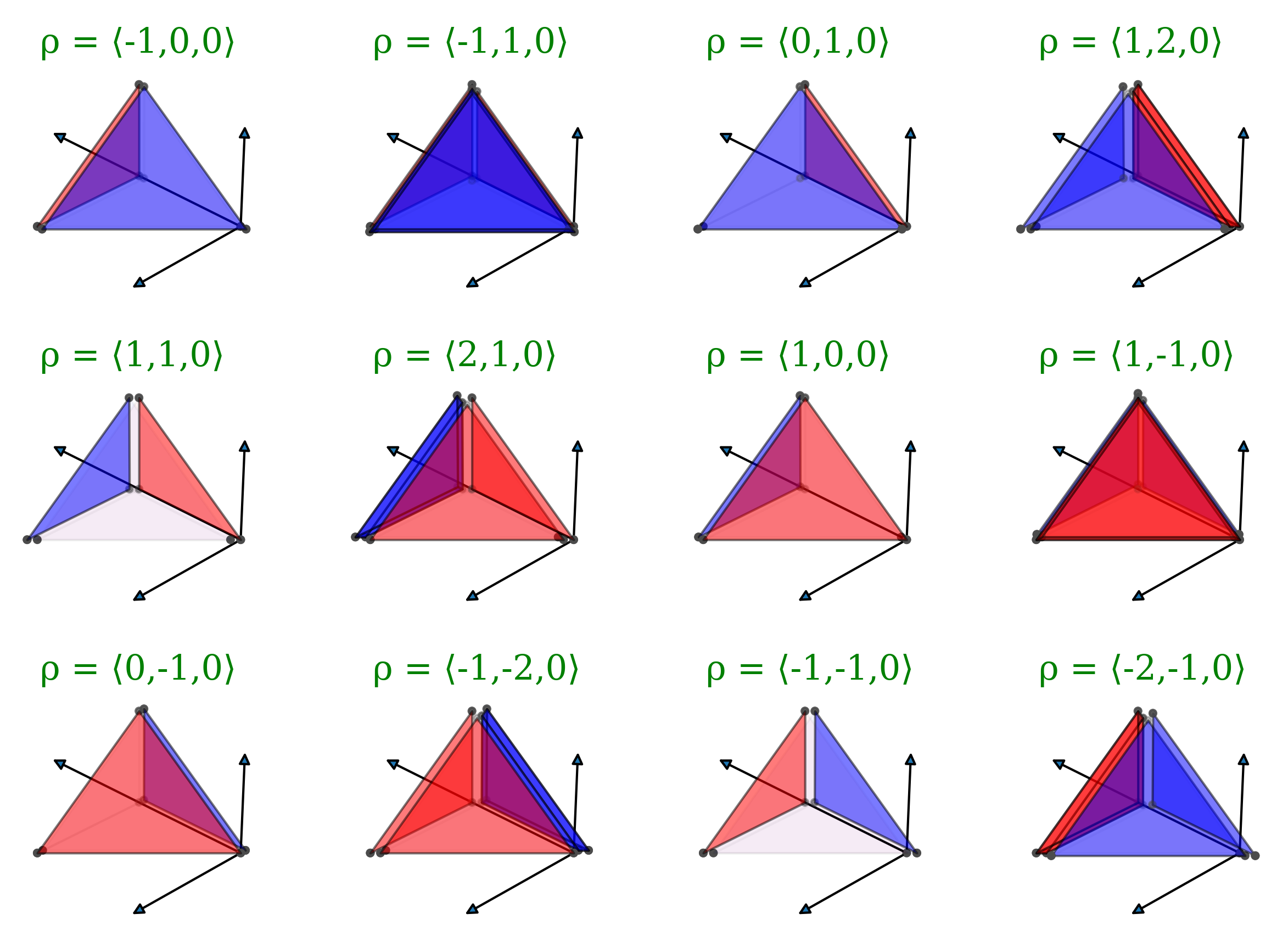}
    \caption{The 12 equivalence classes for the single-step simplification of Equation~\ref{eq:sr-example}.}
    \Description{Equivalence classes}
    \label{fig:equivalence-classes}
\end{figure}

Let $\mathcal{F}$ be an arbitrary facet of the reduction body.
Let $\mathcal{F}_{i}$ denote the $i$'th facet of $\mathcal{F}$ (i.e., its $i$'th child), and let $\halfvec{\nu_{i}}$ be the linear part of the normal vector of $\mathcal{F}_{i}$.
Let the symbol $\oplus$ be the reduction operator, and $\ominus$ be its inverse if $\oplus$ is invertible.
The single-step simplification of $\mathcal{F}$, summarized in Section~\ref{sec:background-single-step-simplification}, with the reuse vector $\halfvec{\rho}$, results in a residual computation on some of its facets.
The orientation of $\halfvec{\rho}$ relative to each facet dictates the type of residual computation that occurs on $\mathcal{F}_{i}$.
There are three possibilities, depending on the sign of the dot product between $\halfvec{\rho}$ and $\halfvec{\nu_{i}}$:
\begin{enumerate}
    \item If $\halfvec{\rho} \cdot \halfvec{\nu_{i}} > 0$ then $\mathcal{F}_{i}$ contributes with the $\oplus$ operator.
    \item If $\halfvec{\rho} \cdot \halfvec{\nu_{i}} < 0$ then $\mathcal{F}_{i}$ contributes with the $\ominus$ operator.
    \item If $\halfvec{\rho} \cdot \halfvec{\nu_{i}} = 0$ then $\mathcal{F}_{i}$ does not contribute at all.
\end{enumerate}
Each facet $\mathcal{F}_{i}$ may be labeled as either an $\oplus$-face, $\ominus$-face, or $\oslash$-face respectively.~\footnote{Labels are pronounced ``positive'', ``negative'', and ``invariant'' faces respectively. The notion of \emph{labelings} are not part of the prior work of GR06, we introduce this language to help explain why simplification can fail in the following sections.}
We say that $\halfvec{\rho}$ \textit{induces a particular labeling}, $\mathcal{L}$, on the facets of $\mathcal{F}$.
Each labeling corresponds to one of the equivalence classes.
For a face with $m$ facets, there are a total of $3^m$ distinct labelings.
However, many of these will not be possible.
There is no way to mark all facets with the same label, for example, since the set of $\halfvec{\rho}$ with a positive dot product to all normal vectors in a convex polytope is empty.
In this example, the 12 possible labelings (equivalence classes) are shown in Figure~\ref{fig:equivalence-classes} with $\oplus$-faces colored in red, $\ominus$-faces in blue, and $\oslash$-faces uncolored.

\subsection{Recursive Simplification} \label{sec:background-recursive-simplification}
In the general case, reductions have a $d$-dimensional reduction body, an $a$-dimensional accumulation, and an $r$-dimensional reuse space.
The process of Section \ref{sec:background-single-step-simplification} is applied recursively on the face lattice, starting with $\mathcal D$.
At each step, we simplify the facets of the current face $\mathcal F$.
The key idea is that exploiting reuse along $\halfvec{\rho}$ avoids evaluating the reduction expression at most points in $\mathcal F$.
Specifically, let $\mathcal{F}'$ be the translation of $\mathcal F$ along $\halfvec{\rho}$.
Then all the computation in $\mathcal{F}\cap \mathcal{F}'$ is avoided, and we only need to consider the two differences $\mathcal{F}'\backslash \mathcal{F}$ and $\mathcal{F}\backslash \mathcal{F}'$, i.e., the union of some of the facets of $\mathcal{F}$.  
% We are left with \emph{residual computations} defined only on (a subset of) the (thick/extended) facets of $\mathcal F$.

At each recursive step down the face lattice, the asymptotic complexity is reduced by exactly one polynomial degree, as facets of $\mathcal F$ are strictly smaller dimensional subspaces.
Furthermore, at each step, the newly chosen $\halfvec{\rho}$ is linearly independent of the previously chosen ones.
Hence, the method is optimal---all available reuse is fully exploited.
This holds regardless of the choice of $\halfvec{\rho}$ at any level of the recursion, even though there may be infinitely many choices.

Bringing everything together, the residual $O(N^2)$ computation in Equation~\ref{eq:sr1} of our working example,
\begin{equation} \label{eq:residual-2d-reduction}
    Y_{i} = \sum_{j=0}^{j \leq i} X_{i-j}
\end{equation}
can be thought of as a completely new 2-dimensional reduction that may be further simplified.
We do not describe this in detail here, as it is very similar to the prefix sum described in Section~\ref{sec:motiv-prefix-sum}.

\subsection{Simplification Enhancing Transformations} \label{sec:reduction-decomposition}

GR06 proposes several simplification-enhancing transformations that can be used to expand the reuse space.
We briefly summarize one of them, which exploits commutativity, called \textit{reduction decomposition} here, as we rely on this heavily in Section~\ref{sec:dD-reductions}.
Given two functions $f_{p}''$ and $f_{p}'$ such that $f_{p} = f_{p}'' \circ f_{p}'$,  a reduction of the form in Equation~\ref{eq:reduction-def} with multi-dimensional accumulation may be rewritten as the following two reductions,
\begin{align} \label{eq:decomp-def}
Y_{f_{p}''(z)} &= \bigoplus_{z \in \mathcal{D}} Z_{f_{p}'(z)} \\
Z_{f_{p}'(z)} &= \bigoplus_{z \in \mathcal{D}} X_{f_{r}(z)}
\end{align}
with the introduction of a new variable $Z$ to hold partial answers.
We can think of this as decomposing a higher dimensional reduction into a lower dimensional \textit{reduction of reductions}, which is legal because the order of accumulation does not matter. %\todo{This is why we need commutativity.  Need to say it up front}
This transformation is useful because it affects which facets can be ignored.
We precisely characterize this in Section~\ref{sec:boundary-facets}.

\subsubsection*{An Example} Our working example in Equation~\ref{eq:sr-example} has two dimensions of accumulation (i.e., along $j$ and $k$) because $f_{p} = \{[i,j,k] \rightarrow [i]\}$.
Therefore Equation~\ref{eq:sr-example} could be explicitly written as the following double summation,
\begin{equation}
  \label{eq:sr-example-decomp1}
   Y_{i} = \sum_{j=0}^{i} \sum_{k=0}^{i-j} X_{k}
\end{equation}
where the inner reduction accumulates over $k$.
We have not explicitly separated this into two separate equations. Still, the inner reduction should be interpreted as a reduction with an accumulation characterized by $f_{p}' = \{[i,j,k] \rightarrow [i,j] \}$ (i.e., producing a 2-dimensional intermediate answer) and the outer reduction as one characterized by $f_{p}'' = \{[i,j] \rightarrow [i] \}$.
Note that there are many other decompositions available.
For example, the inner reduction could be expressed over $j$ instead or any linear combination of $j$ and $k$, for that matter.
We discuss methods for constructing $f_{p}'$ in Section~\ref{sec:dD-reductions}.

\section{Splitting to Avoid Simplification Failure} \label{sec:extending-simplification}

In this section, we present the intuition of our main result with a simple example.
The subsequent sections provide the precise formulation and proofs for the general cases.

\subsection{How and When Simplification Can Fail: Optimal vs.\ Maximal}  \label{ref:how-can-fail}

As discussed in the previous section, the GR06 algorithm proceeds recursively down the face lattice, simplifying facets one by one while additional reuse exists.
At each step of the recursion, on a particular face $\mathcal{F}$, several candidate choices of reuse are explored, one for each possible labeling of the facets of $\mathcal{F}$.
For a particular labeling, each facet is treated as either an $\oplus$-face, $\ominus$-face, or $\oslash$-face.
Let us assume now that the reduction operator does not admit an inverse, which means that $\ominus$-faces must be avoided.
Recall that some facets may be ignored (e.g., all but the top red oblique 2-face in Figure~\ref{fig:sr-example}), but some can not; let us refer to these facets that can \emph{not} be ignored as \emph{residual facets}.
For a particular labeling, simplification is impossible when two residual facets exist with opposite labels, one $\oplus$-face and one $\ominus$-face.
If all residual facets are $\oplus$-faces, then simplification can obviously proceed.
Similarly, if they are all $\ominus$-faces, we can simply negate $\halfvec{\rho}$ to view them as $\oplus$-faces.
The single-step simplification of Section~\ref{sec:background-single-step-simplification} fails if all candidate choices of reuse induce labelings that involve residual facets with opposite labels.

\subsubsection{An Example}
Look back at the prefix max from Section~\ref{sec:motiv-prefix-max}.
The reduction body is the triangle, $\mathcal{D} = \{[i,j] \mid 0 \leq j \leq i \leq N\}$ and the reuse space is the 1D space along the $i$-axis, $\mathcal{R} = \{[i,j] \mid j=0 \}$.
This means that there are only two labelings, shown on the left of Figure~\ref{fig:prefix-max-equivalence-classes}.
The labeling induced by $\halfvec{\rho} = \langle 1,0 \rangle$ involves a single residual $\oplus$-face.
In contrast, the labeling induced by $\langle -1,0 \rangle$ involves two residual facets with opposite labels (i.e., one red and one blue coloring).
Since at least one labeling exists with no oppositely labeled residual facets, simplification succeeds and enables us to write Equation~\ref{eq:motiv-prefix-max-simplified-v1}.
Note that the vertical patterned blue facet on the left triangle can be ignored because it is a boundary facet, as discussed in Section~\ref{sec:boundary-facets}.
\begin{figure}[tbh]
    \centering
    \includegraphics[width=0.9\textwidth]{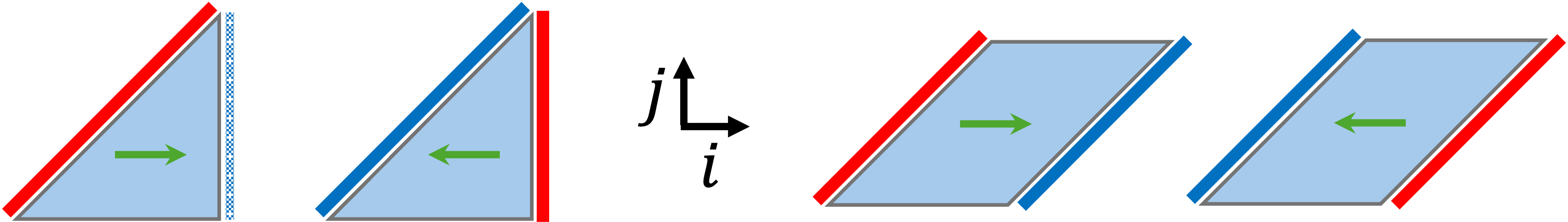}
    \caption{The two labelings for the prefix max example from Section~\ref{sec:motiv-prefix-max} shown by the triangles on the left. The two labelings for the parallelogram in Equation~\ref{eq:parallelogram} are shown on the right. Residual facets are colored in solid red ($\oplus$-faces) and blue ($\ominus$-faces). The vertical patterned blue facet on the left triangle is an outward boundary facet. Therefore, it can be ignored, and thus, the labeling induced by $\langle 1,0 \rangle$ is possible since this does not involve residual facets with opposite labels.}
    \Description{Equivalence classes}
    \label{fig:prefix-max-equivalence-classes}
\end{figure}

Now consider the following example:
\begin{equation} \label{eq:parallelogram}
    Y_{i} = \max_{(0 \leq j) \land (i-N \leq j)}^{(j \leq i) \land (j \leq N)} X_{j}
\end{equation}
This reduction has a body in the shape of a parallelogram with two possible labelings, as shown on the right of Figure~\ref{fig:prefix-max-equivalence-classes}.
Since all (two) labelings involve at least one pair of residual facets with opposite labels, simplification fails.

\subsubsection{Avoid Labelings With Oppositely Labeled Residual Facets}
Given this formulation characterizing when simplification fails, our goal is to split the reduction body into pieces so that we can guarantee the existence of at least one labeling in each piece that involves only residual $\oplus$-faces exclusively or $\ominus$-faces, but not both.
Before we can describe how to do this, we need to augment GR06's original characterization of non-residual facets, i.e., the facets that may be ignored, which we will call \textit{boundary} and \textit{invariant} facets.

\subsection{Boundary Facets} \label{sec:boundary-facets}

Let $\mathcal{H}$ denote the linear space of the facet $\mathcal{F}$, defined by GR06 as the intersection of the effectively saturated constraints characterizing $\mathcal{F}$.
An unsaturated constraint, $c_{i}$ in $\mathcal{F}$ is characterized as a \emph{boundary} constraint if the following condition holds:
\begin{equation}
    \mathcal{H} \cap \mathcal{A} \subseteq \mathcal{H} \cap \mathrm{ker}(c_{i}) 
\end{equation}
where $\mathcal{A}$ is the accumulation space defined previously, and $\mathrm{ker}(c_{i})$ is the null space of the linear part of the constraint.
This says that a facet is a boundary if its linear space contains the entire accumulation space, i.e., multiple points \emph{on that face} contribute to one or more answers. %rather that the accumulation space is a subspace of the smallest linear space that contains the facet.
Boundary facets are useful because any boundary facet labeled as an $\ominus$-face can be ignored since the ``answer(s)'' are not needed.  

We will make an additional distinction on the degree to which a facet is considered a boundary.
Let us further characterize boundary facets as \textit{strong} or \textit{weak} based on the following definitions.

\begin{definition} \label{def:strong-boundary}
 Let a facet of $\mathcal{F}$, defined by the effectively saturated constraint $c_{i}$, be called a \emph{\textbf{strong}} boundary facet if the following condition holds,
 \begin{equation}
 \mathrm{rank}\big(\mathcal{A} \cap \mathrm{ker}(c_{i})\big) = \mathrm{rank}\big(\mathcal{A}\big)
 \end{equation}
\end{definition} \label{def:weak-boundary}
\begin{definition}
 Let a facet of $\mathcal{F}$, defined by the effectively saturated constraint $c_{i}$, be called a \emph{\textbf{weak}} boundary facet if the following condition holds,
 \begin{equation}
 0 < \mathrm{rank}\big(\mathcal{A} \cap \mathrm{ker}(c_{i})\big) < \mathrm{rank}\big(\mathcal{A}\big)
 \end{equation}
\end{definition}
\noindent In other words, a strong boundary facet is one where no other point contributes to the answer(s) to which the points on the facet contribute.
In contrast, a facet whose linear space contains \textit{part of} (i.e., has a non-trivial and incomplete intersection with) the accumulation space is said to be a weak boundary.
Note that these are mutually exclusive; a facet can not be simultaneously strong and weak.

\subsubsection*{An example}

The distinction between strong and weak boundaries only has meaning in 3D and higher.
Look back at the working example from Section~\ref{sec:background-working-ex}.
The accumulation space in this example is the $jk$-plane.
Of its four 2-faces, shown in Figure~\ref{fig:sr-example}, the ``\{1\}'' face, at $i=N$, is a strong boundary because its linear space, the $jk$-plane contains the entire accumulation space.
The other three 2-faces are all weak boundaries because the intersection of their linear spaces with the accumulation space is a 1D subspace.
% Intuitively, any two planes in $\mathbb{Z}^3$ passing through the origin either are the same plane or intersect along a 1D subspace.\todo{Didn't get this}

\subsection{Invariant Facets}  \label{sec:invariant-facets}

GR06 does not explicitly characterize what we will call invariant facets.
We can think of an invariant facet as the dual of a boundary facet, but from the perspective of the reuse space instead of the accumulation space.
Let an unsaturated constraint $c_{i}$ in $\mathcal{F}$ be characterized as an \textit{invariant} constraint if the following condition holds:
\begin{equation}
    \mathcal{H} \cap \mathcal{R} \subseteq \mathcal{H} \cap \mathrm{ker}(c_{i}) 
\end{equation}
where $\mathcal{R}$ is the reuse space.
Like boundary facets, invariant facets are useful because the recursion never proceeds into an invariant facet (i.e., invariant facets are always labeled as $\oslash$-faces regardless of the choice of reuse $\halfvec{\rho}$).

Similarly, we distinguish the extent to which a facet is invariant based on the following definitions.
\begin{definition} \label{def:strong-invariant}
 Let a facet of $\mathcal{F}$, defined by the effectively saturated constraint $c_{i}$, be called a \emph{\textbf{strong}} invariant facet if the following condition holds,
 \begin{equation}
 \mathrm{rank}\big(\mathcal{R} \cap \mathrm{ker}(c_{i})\big) = \mathrm{rank}\big(\mathcal{R}\big)
 \end{equation}
\end{definition} \label{def:weak-invariant}
\begin{definition}
 Let a facet of $\mathcal{F}$, defined by the effectively saturated constraint $c_{i}$, be called a \emph{\textbf{weak}} invariant facet if the following condition holds,
 \begin{equation}
 0 < \mathrm{rank}\big(\mathcal{R} \cap \mathrm{ker}(c_{i})\big) < \mathrm{rank}\big(\mathcal{R}\big)
 \end{equation}
\end{definition}
Note that a facet may be simultaneously a weak invariant and a weak boundary.
This happens when the intersection of accumulation space, reuse space, and linear space of the facet is non-trivial.

\subsubsection*{An example}

Again, looking back at the working example in Section~\ref{sec:background-working-ex}, the reuse space is the $ij$-plane.
The bottom ``\{2\}'' face is,, therefore,, a strong invariant facet, and the other faces are weak invariant facets.
In other words, any vector $\halfvec{\rho} = \langle \rho_{i}, \rho_{j}, 0 \rangle \in \mathcal{R}$ labels the bottom 2-face an $\oslash$-face because any such $\halfvec{\rho}$ is orthogonal to the normal vector of the bottom 2-face, $\langle 0,0,1 \rangle$.

\subsection{Residual Facets}

As mentioned in Section~\ref{ref:how-can-fail}, we can ignore some of the facets during single-step simplification.
Let us refer to the facets which can \textbf{not} be ignored as residual facets, which are defined as follows:

\begin{definition}
    A facet of $\mathcal{F}$, defined by the effectively saturated constraint $c_{i}$, will be called a \textbf{\textit{residual}} facet if the following condition holds,
    \begin{equation}
        \Big(\mathrm{rank}\big(\mathcal{A} \cap \mathrm{ker}(c_{i})\big) < \mathrm{rank}\big(\mathcal{A}\big)\Big) \land \Big(\mathrm{rank}\big(\mathcal{R} \cap \mathrm{ker}(c_{i})\big) < \mathrm{rank}\big(\mathcal{R}\big)\Big)
    \end{equation}
\end{definition}

\noindent In other words, residual facets are neither strong boundary facets nor strong invariant facets.
Residual facets are those into which the recursion may need to explore that can potentially be labeled as $\ominus$-faces.

\subsection{Split Reduction} \label{sec:split-reduction}

Le Verge~[\citeyear{le_verge_environnement_1992}] showed that a reduction in the form of Equation~\ref{eq:reduction-def}, with the body $\mathcal{D}$ that can be partitioned into  disjoint subsets $\mathcal{D}_{1}$ and $\mathcal{D}_{2}$ (i.e., $\mathcal{D} = \mathcal{D}_{1} \cup \mathcal{D}_{2} $ and $\mathcal{D}_{1} \cap \mathcal{D}_{2} = \phi$) may be rewritten as the following:
\begin{equation} \label{eq:split-reduction}
Y_{f_{p}(z)} = \left\{  \begin{array}{lcl}
      f_p(\mathcal{D}_1) \cap f_p(\mathcal{D}_2) &:&\displaystyle \Big(\bigoplus_{z \in \mathcal{D}_1} X_{f_d(z)}\Big) \oplus \Big(\bigoplus_{z \in \mathcal{D}_2} X_{f_d(z)}\Big) \\[5mm]
      f_p(\mathcal{D}_1) ~\backslash~ f_p(\mathcal{D}_2) &:&\displaystyle \bigoplus_{z \in \mathcal{D}_1} X_{f_d(z)} \\[5mm]
      f_p(\mathcal{D}_2) ~\backslash~ f_p(\mathcal{D}_1) &:&\displaystyle \bigoplus_{z \in \mathcal{D}_2} X_{f_d(z)}
 \end{array}\right.
\end{equation}
Even though the domain is only split into \textit{two} pieces, there may be \textit{three} branches because the pieces may overlap in the answer space.
%The reader can find the proof and additional information about Equation~\ref{eq:split-reduction} in Le Verge~[\citeyear{le_verge_environnement_1992}].

\subsubsection*{An example}

\begin{figure}[h]
    \centering
    \includegraphics[width=0.6\textwidth]{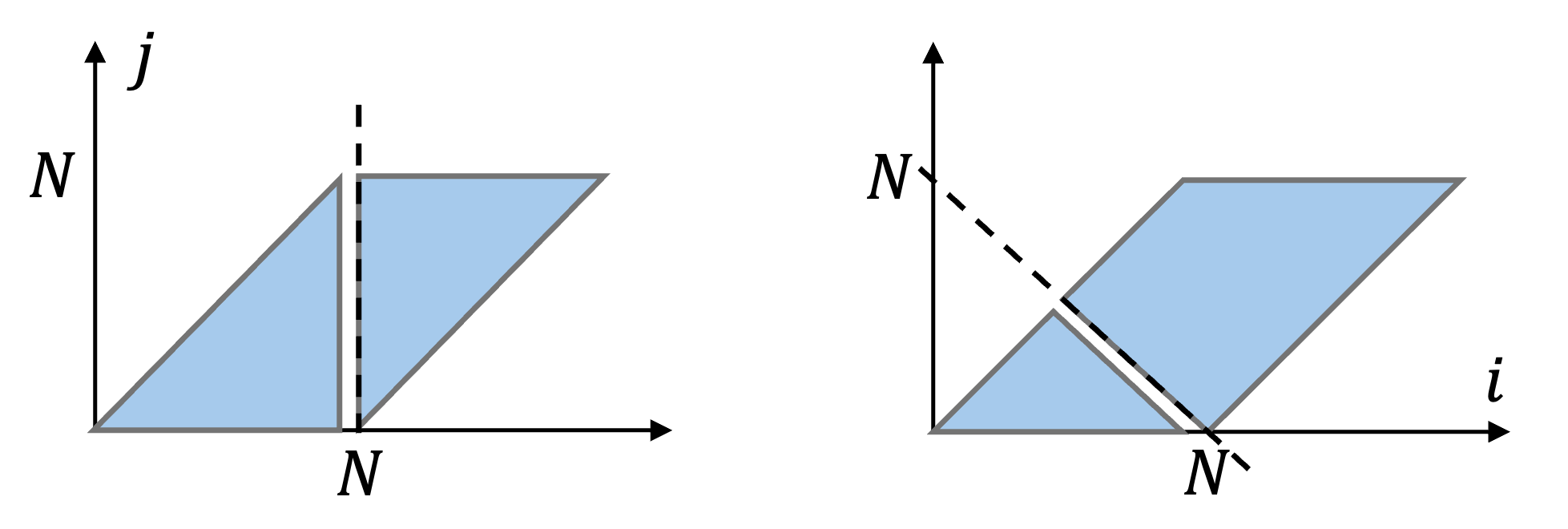}
    \caption{Two example splits of the reduction body of Equation~\ref{eq:parallelogram}, the split along $i=N$ shown on the left and the split along $i+j=N$ on the right. Both cases have two pieces, but the split by $i+j=N$ results in three branches of the transformed equation because the pieces overlap when projected onto the answer space (i.e., the $i$-axis).}
    \Description{split reduction}
    \label{fig:split-reduction}
\end{figure}
The reduction body in the example from Equation~\ref{eq:parallelogram} is the parallelogram, $\mathcal{D} = \{[i,j] \mid (0 \leq j \leq N) \land (i-N \leq j \leq i)\}$.
Consider one split of $\mathcal{D}$ by the hyperplane $i=N$, where $\mathcal{D}_{1} = \mathcal{D} \cap \{[i,j] \mid i<N\}$ and $\mathcal{D}_{2} = \mathcal{D} \cap \{[i,j] \mid i \geq N\}$ shown on the left of Figure~\ref{fig:split-reduction}.
Spliting the reduction in Equation~\ref{eq:parallelogram} along $i=N$ yields:
\begin{equation} \label{eq:split-i-eq-N}
    Y_{i} = \left\{  \begin{array}{lcl}
      0 \leq i<N &:&\displaystyle \max_{j=0}^{j \leq i} X_{j} \\[5mm]
      N \leq i \leq 2N &:&\displaystyle \max_{j=i}^{j \leq N} X_{j}
 \end{array}\right.
\end{equation}
Equation~\ref{eq:split-i-eq-N} only has two branches because the projections of the pieces onto the answer space do not overlap (i.e., project the triangles on the left of Figure~\ref{fig:split-reduction} down onto the $i$-axis).

\newcommand{\lpiece}[0]{\max_{j \in \mathcal{D}_{1}} X_{j}}
\newcommand{\rpiece}[0]{\max_{j \in \mathcal{D}_{2}} X_{j}}

Consider another split $\mathcal{D}$ by the hyperplane $i+j=N$, where $\mathcal{D}_{1} = \mathcal{D} \cap \{[i,j] \mid i+j<N\}$ and $\mathcal{D}_{2} = \mathcal{D} \cap \{[i,j] \mid i+j \geq N\}$ shown on the right of Figure~\ref{fig:split-reduction}, yielding:
\begin{equation} \label{eq:split-i-plus-j-eq-N}
    Y_{i} = \left\{  \begin{array}{lcl}
      (N \leq 2i) \land (i<N) &:&\displaystyle \max\Big(\Big( \lpiece \Big), \Big( \rpiece \Big) \Big) \\[5mm]
      0 \leq 2i <N &:&\displaystyle \lpiece \\[5mm]
      N \leq i \leq 2N &:&\displaystyle \rpiece
 \end{array}\right.
\end{equation}
In this case, there are three branches because the triangle and quadrilateral on the right of Figure~\ref{fig:split-reduction} overlap from $N/2 \leq i < N$ when projected onto the $i$-axis.

\subsection{Strong Boundary and Invariant Splits} \label{sec:boundary-invariant-splits}

In general, there are infinitely many ways to split the reduction body, $\mathcal{D}$, but not all are useful.
Recall that simplification can fail when the reduction operator does not admit an inverse because a pair of oppositely labeled residual facets may be unavoidable.
Splitting isolates some of the facets of $\mathcal{D}$ into one piece, making it possible to separate pairs of conflicting facets.
% When pairs of oppositely labeled facets can not be avoided in any labeling (like the example in Equation~\ref{eq:parallelogram}), we can split $\mathcal{D}$ to separate the pair of conflicting facets.
Each piece retains some facets of $\mathcal{D}$ and obtains a single new facet characterized by the splitting hyperplane.
We want to split $\mathcal{D}$ so that each piece's reuse space is less constrained, thereby enabling simplification.

However, an arbitrary split on $\mathcal{D}$ may undesirably further constrain the available reuse space of each piece because the newly introduced facet changes the set of new labelings permissible in each piece.
However, we can avoid this issue by only making splits that introduce a new strong boundary or strong invariant facet.
This is useful because strong boundary $\ominus$-faces and invariant facets are never residual and can, therefore, be ignored.

If $\mathcal{D}$ is $d$-dimensional, then let $h$ be a ($d-1$)-dimensional hyperplane that separates $\mathcal{D}$ into two non-empty pieces.
The hyperplane $h$ is characterized by a single equality constraint, $c_{h}$.
Let $\mathrm{ker}(c_{h})$ denote the null space of the linear part of $c_{h}$.
\begin{definition} \label{def:strong-boundary-split}
Let $h$ be called a strong boundary split if $\mathcal{A} \subseteq \mathrm{ker}(c_{h})$.
\end{definition}
\begin{definition} \label{def:strong-invariant-split}
Let $h$ be called a strong invariant split if $\mathcal{R} \subseteq \mathrm{ker}(c_{h})$.
\end{definition}

For example, the vertical split at $i=N$ shown previously in Figure~\ref{fig:split-reduction} illustrates a strong boundary split because the accumulation space of the reduction in Equation~\ref{eq:parallelogram} is $\mathcal{A} = \{[i,j] \mid i=0\}$, which is indeed a subset of the null space of the linear part of $i=N$ (they happen to be the same space in this example).
Consequently, the resultant reductions in each branch of Equation~\ref{eq:split-i-eq-N} can be simplified independently because there exists a labeling involving only residual $\oplus$-faces for each.
These are indeed both instances of the prefix max discussed in Section~\ref{sec:motiv-prefix-max}.

\section{Problem Formulation and Hypotheses} \label{sec:problem-formulation}

As we have discussed, simplification is a powerful transformation that lowers the asymptotic complexity of the underlying computation.  
However, it is not always possible as motivated by the reduction examples in Equations~\ref{eq:motiv-i-to-2i} and~\ref{eq:parallelogram}.
In this section, we make our primary claim in the form of Theorem~\ref{th:main-claim}.
Like GR06, we assume that the input program only involves a single size parameter and any reductions present are \textit{independent} (i.e., there is no cycle in the dependence graph among the variables appearing inside the reduction body and the answer variable on the left-hand side of the container equation).

\begin{theorem}
\label{th:main-claim}
Given an independent reduction with a $d$-dimensional body, an $a$-dimensional accumulation space, and an $r$-dimensional reuse space, it may always be transformed into an equivalent reduction with an asymptotic complexity that has been decreased by $l = \mathrm{min}(a, r)$ polynomial degrees.
\end{theorem}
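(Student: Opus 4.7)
The plan is to prove Theorem~\ref{th:main-claim} by induction on $l = \min(a,r)$. Each inductive step will reduce the asymptotic complexity by exactly one polynomial degree and descend to residual subproblems whose $\min(a',r')$ equals $l-1$, at which point the hypothesis supplies the remaining $l-1$ degrees of reduction.

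The base case $l=0$ is trivial: either $a=0$, in which case the ``reduction'' is a copy of a single value to each output and is already optimal, or $r=0$, in which case the read function has full rank, no reuse exists, and the identity transformation suffices. For the inductive step, fix a reduction with $\min(a,r)=l>0$. The easy subcase is when some reuse vector $\halfvec{\rho}\in\mathcal{R}$ induces a labeling on the facets of $\mathcal{D}$ whose residual facets are either all $\oplus$-faces or all $\ominus$-faces (negating $\halfvec{\rho}$ interchanges the two). Then the GR06 single-step simplification of Section~\ref{sec:background-single-step-simplification} applies directly: the residual computation moves onto strictly lower-dimensional facets, the accumulation and reuse spaces each lose exactly one dimension, and the inductive hypothesis on each residual facet delivers the claim.

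The core of the proof addresses the hard subcase, in which every labeling of $\mathcal{D}$ contains a pair of oppositely labeled residual facets. Here I would invoke the splitting machinery of Section~\ref{sec:split-reduction}, chaining a finite sequence of strong boundary and strong invariant splits (Definitions~\ref{def:strong-boundary-split} and~\ref{def:strong-invariant-split}) so that each resulting piece admits a conflict-free labeling and simplification proceeds as in the easy subcase. The crucial property is that the new facet introduced by any such split is itself a strong boundary or strong invariant facet, hence non-residual by definition; consequently neither the accumulation space nor the reuse space of a piece is effectively diminished, and the per-piece invariant $\min(a,r)=l$ is preserved for the single-step argument to apply. Since only a constant (parameter-independent) number of pieces is produced at each level of the recursion, summing complexities across branches preserves the $O(N^{d-l})$ bound.

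The principal obstacle is exhibiting, constructively and in full generality, a \emph{finite} set of strong boundary or strong invariant hyperplanes that suffice to separate conflicting facets in every piece. I expect this to require case analysis on the relative geometry of $\mathcal{D}$, its facet normals, and the subspaces $\mathcal{A}$ and $\mathcal{R}$. For higher-dimensional bodies, reduction decomposition (Section~\ref{sec:reduction-decomposition}) should allow us to rewrite a multi-dimensional accumulation as a reduction of reductions, shrinking the candidate split set by reducing the analysis to lower-dimensional problems; this is the route taken in Section~\ref{sec:dD-reductions}. The residual planar situation, where decomposition no longer simplifies matters, will likely require a direct enumeration of candidate hyperplanes parametrized by the facet normals, together with a verification that each piece of the resulting partition contains no pair of conflicting residual facets. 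Establishing finiteness, rather than merely the existence, of an adequate splitting scheme is the delicate point: any unbounded regress would inflate the number of branches with $N$ and defeat the $O(N^{d-l})$ bound.
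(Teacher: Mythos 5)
Your overall architecture---induction that peels off one degree at a time, splitting into the ``easy'' subcase (some labeling has no oppositely labeled residual facets, so GR06 single-step simplification applies) and the ``hard'' subcase (every labeling conflicts, so split along strong boundary/invariant hyperplanes)---matches the paper, and your higher-dimensional sketch is essentially Section~\ref{sec:dD-reductions}: reduction decomposition turns $a-1$ residual facets into strong boundaries (Lemma~\ref{lemma:projdec}), a reuse vector in the joint intersection labels $r-1$ more as $\oslash$-faces (Lemma~\ref{lemma:depdec}), and since a $d$-simplex has only $d+1$ facets this leaves at most three, separable by one SPB/SPI split through a $(d-2)$-face (with Lemma~\ref{lemma:simplicial-cut} guaranteeing the pieces are again simplices, which is what bounds the regress). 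You do not supply that counting argument or the simplex-preservation step, which is the actual finiteness mechanism, but the outline is right.

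The genuine gap is in the two-dimensional residue, and it is precisely the point you flag as ``delicate'' and then resolve in the wrong direction. For a triangle with two residual edges whose corner vertex is \emph{not} covered (the sliding window $Y_i=\max_{j=i}^{2i}X_j$ of Section~\ref{sec:motiv-i-to-2i} is the canonical instance), \emph{no} finite, parameter-independent set of strong boundary or invariant hyperplanes separates the conflict: every vertical or horizontal cut leaves behind a smaller triangle of exactly the same conflicting shape. Your proposal's insistence that ``only a constant (parameter-independent) number of pieces is produced'' and that finiteness must be established therefore cannot be carried out; a direct enumeration of candidate hyperplanes will not terminate with a conflict-free partition. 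The paper's resolution (Section~\ref{sec:uncovered}, Theorem~\ref{th:triangle}) is qualitatively different: two cuts yield two right triangles plus a \emph{homothetic copy} of the original, and a separate induction on the size parameter $n$ (not on $l$) shows the recursion bottoms out after $\Theta(\log N)$ self-similar stages while still charging only $O(1)$ reduction operations per output element, because the pieces shrink geometrically. The resulting code is recursive rather than polyhedral---the construction deliberately leaves the polyhedral model. Without this fractal argument, your induction on $\min(a,r)$ has no way to discharge the 2D case, and hence no way to complete the proof of Theorem~\ref{th:main-claim}.
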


The proof of Theorem~\ref{th:main-claim} will follow from Sections~\ref{sec:dD-reductions} and~\ref{sec:2D-reductions (triangles)} where we show how to handle all possible combinations of the dimensionalities of the accumulation space, reuse space, and reduction body that can occur.
For each case, we will show that the reduction can be split into pieces such that each piece has at least one possible labeling without any residual $\ominus$-faces.

\subsection{Assumptions}

We make several assumptions about the input reductions to justify Theorem~\ref{th:main-claim}.
We emphasize that this does not introduce any loss of generality and the following assumptions are made solely to facilitate the proofs in the following sections.

\subsubsection{Separate Accumulation and Reuse Dimensions Only}

We must only consider reductions involving separate accumulation and reuse dimensions, where $a + r = d$.
Any reduction where $a+r \neq d$ can be systematically transformed into one or more instances of reductions where $a+r = d$.
Therefore, it is sufficient to only consider reductions with accumulation and reuse dimensions.

First, consider the case where $a+r>d$.
In such cases, the accumulation and reuse spaces have a non-trivial intersection, which means that the reduction accumulates the \textit{same value} at many points in the body (i.e., along this intersection).
GR06 describes special simplification cases that may be applied to remove this intersection, which they call \textit{higher order operator} and \textit{idempotence} simplifications.
The working example from Section~\ref{sec:background-working-ex} is in an instance of this case; the reduction body is 3-dimensional while the accumulation and reuse spaces are 2-dimensional, though we did not employ these special simplifications.

Second, consider the case where $a+r < d$.
Such cases can be viewed as families of reductions, with  $d-(a+r)$ \emph{independent parameters}, reading independent slices of the inputs and producing independent slices of the outputs.
For example, the reduction:
\begin{equation*}
    Y[i,j] = \max_{k=i}^{2i} X[j,k]
\end{equation*}
has a 3D domain, 1D accumulation, and 1D reuse space.
However, the index $j$ should be viewed as an independent parameter.
Thus, the overall computation can be viewed as $O(N)$ instances of independent 2D reductions, each with a 1D accumulation and 1D reuse space, one for each value of $j$, embedded in a 3D space.
No simplification is possible among the different instances (each of the  $O(N)$ reduction instances along $j$ must be computed), but further simplification may be possible within the $ik$ dimensions).

\subsubsection{Orthogonal Accumulation and Reuse Along Canonical Axes} \label{sec:assumption-canonic-axes}

Let us assume that the accumulation and reuse spaces are \textit{orthogonal} and are oriented along the canonical axes. 
Let the reuse space be along the first $r$ canonical axes and the accumulation space be along the last $a$ axes.
The program variable domains and the indexing expressions can always be reindexed to put the accumulation and reuse along the canonic axes as described by Le Verge~[\citeyear{le_verge_environnement_1992}].
% As simplification proceeds down the face lattice, residual reductions may need to be preprocessed to enure this before the recursion continues.\todo{Why?  We can alwaus implement the aldgorithm so that this is not necessary.}
% The residual 2D reduction from the working example in Section~\ref{sec:background-working-ex} illustrates this idea.
% The reduction in Equation~\ref{eq:residual-2d-reduction}, resulting from the simplification of the original 3D reduction along the reuse vector $\halfvec{\rho} = \langle 1,0,0 \rangle$, for example, has an accumulation space, along $j$, as desired.
% But, the reuse space is along the oblique vector $\langle 1, 1 \rangle$, because $\mathcal{R} = \{ [i,j] \mid i=j \}$.
% The expression in the reduction body can be reindexed with the mapping $\{[i,j] \rightarrow [i,i-j]\}$ to move the reuse space to the first canonical axis, as desired.
% Therefore, there is no loss of generality. %for the remainder of our discussion, we will assume that all reductions have been reindexed accordingly.

\subsubsection{Domain of the Reduction Body is a Simplex}

We restrict our analysis to domains that are simplexes (i.e., hyper-triangular) based on the following definition, adapted from Gruber~[\citeyear{gruber_convex_2007}].
% \todo{There may be a better reference}
\begin{definition} \label{def:simplex}
A ($d$)-simplex is the ($d$)-dimensional polytope defined as the convex combination of $d+1$ affinely independent vertices.
\end{definition}
In practice, decomposing the domain into simplices may not always be necessary.
However, we use properties of simplices to prove that simplification is always possible.
There may be heuristic solutions to decide how to split non-simplices, but we do not consider any such approaches here.
Therefore, in the remaining discussion, we assume that any reductions appearing in the input program have been preprocessed and initially decomposed into simplices.

Our maximal simplification result directly carries over to general polyhedral sets because any ($d$)-dimensional parametric polytope can be decomposed into the union of ($d$)-dimensional simplices.
Triangulating multi-dimensional polytopes is common in computer graphics~\cite{ganapathy_new_1982, narkhede_vii5_1995}, for example.
Simplices have useful properties used in our proofs:
\begin{itemize}
    \item Any ($k$)-face of an ($d$)-simplex is itself a ($k$)-simplex.
The number of ($k$)-faces of a ($d$)-simplex are given by the  binomial coefficients, $d+1 \choose k+1$
unique combinations of its $d+1$ vertices. 
    \item A ($d$)-simplex can be split into two ($d$)-simplices by adding one new affinely independent vertex. See Lemma~\ref{lemma:simplicial-cut}.
\end{itemize}

\subsection{Simplex-Preserving Strong Boundary or Invariant Splits} \label{sec:simplex-preserving-splits}

Repeatedly trying to make arbitrary splits runs the risk of falling into an endless loop.
We need a way to guarantee that the process of splitting will terminate.
Recall that simplification of a reduction with the body $\mathcal{D}$ fails when every possible labeling involves one or more pairs of oppositely labeled \emph{residual} facets.
Therefore, if we can repeatedly split $\mathcal{D}$ in such a way that \emph{both} preserves the total number of facets in each piece \emph{and} strictly decreases the number of residual facets, then it will be possible to guarantee that each piece has a labeling with no conflicting residual facets.

We combine the following two ideas.
First, we will use the fact that splitting a $d$-simplex through any of its ($d-2$)-faces produces two $d$-simplices per Lemma~\ref{lemma:simplicial-cut} and therefore preserves the total number of facets in each piece.
Second, we will use the notion of a strong boundary or invariant split, which introduces a single new \emph{non-residual} facet, as described in Section~\ref{sec:boundary-invariant-splits}.
Combining these two ideas, by making strong boundary or invariant splits through ($d-2$)-faces of simplices, guarantees that the process of splitting will produce pieces with strictly fewer residual facets because the number of total facets in each piece remains the same.
Such splits will be called Simplex-Preserving Strong Boundary (SPB) or Invariant (SPI) splits.

When the accumulation space is one-dimensional, an SPB split can be constructed by infinitely extending one of the ($d-2$) faces along it.
Similarly, when the reuse space is 1-dimensional, extending a ($d-2$)-face along it yields an SPI split.
Since there are finitely many ($d-2$)-faces, there are finitely many candidate SPB and SPI splits to process.

\begin{lemma} \label{lemma:simplicial-cut}
Let a splitting hyperplane of a ($d$)-dimensional polytope be any ($d-1)$-dimensional hyperplane that has points on both sides of the hyperplane.
Any splitting hyperplane that saturates a (d-2)-face of a (d)-simplex produces two (d)-simplices.
\end{lemma}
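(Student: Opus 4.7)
The plan is a direct geometric argument. Let $S = \mathrm{conv}(v_0, \ldots, v_d)$ be the given $d$-simplex and $H$ the splitting hyperplane. Since $H$ saturates a $(d-2)$-face, and every $(d-2)$-face of a simplex is the convex hull of $d-1$ of its vertices, we may assume $H$ contains $v_2, \ldots, v_d$. I would first argue that the remaining two vertices $v_0, v_1$ lie strictly on opposite sides of $H$. If either of them lay on $H$, then $H$ would contain $d$ affinely independent vertices of $S$ and hence be uniquely determined by them; the single vertex off $H$ would force all of $S$ into one closed halfspace, contradicting the splitting condition that $S$ has points strictly on both sides. By the same reasoning, $v_0$ and $v_1$ cannot both lie strictly on the same side. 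Hence they are separated by $H$, and the edge $[v_0, v_1]$ crosses $H$ at a unique interior point $p = t v_0 + (1-t) v_1$ with $t \in (0,1)$.

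Next I would identify the vertex sets of the two pieces $S_+ := S \cap H_+$ and $S_- := S \cap H_-$. The standard polytope fact to invoke is that the vertices of $P \cap H_+$ for a polytope $P$ are the vertices of $P$ that lie in $H_+$ together with the proper intersections of edges of $P$ with $H$. Each edge of $S$ is a segment between two of the $v_i$'s; all edges except $[v_0, v_1]$ have at least one endpoint among $v_2, \ldots, v_d \subset H$, so either lie in $H$ or touch it only at an endpoint, never properly crossing it. Only $[v_0, v_1]$ properly crosses $H$, at the point $p$. Hence the vertices of $S_+$ are exactly $\{v_0, p, v_2, \ldots, v_d\}$, and those of $S_-$ are $\{v_1, p, v_2, \ldots, v_d\}$, each a set of $d+1$ points.

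It remains to verify affine independence of each of these sets, from which it follows that each piece is a $d$-simplex by Definition~\ref{def:simplex}. Suppose for contradiction that $p$ is an affine combination of $v_0, v_2, \ldots, v_d$; substituting $p = t v_0 + (1-t) v_1$ and using $1-t \neq 0$, one solves for $v_1$ as an affine combination of $v_0, v_2, \ldots, v_d$, contradicting the affine independence of the original simplex vertices. The argument for $S_-$ is symmetric, swapping the roles of $v_0$ and $v_1$. The only subtle step is the dichotomy of Paragraph 1 (forcing $v_0, v_1$ off $H$ and on opposite sides); once that is established, the rest is a routine check, so I do not anticipate any major obstacle.
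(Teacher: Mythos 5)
Your proof is correct and follows essentially the same route as the paper's: both identify the $d-1$ vertices of the saturated $(d-2)$-face as lying on the hyperplane, locate the crossing point on the edge joining the two remaining vertices, and form the two pieces as convex hulls of $d+1$ points each. Yours is in fact the more careful version — the paper takes $C$ to be ``any point'' on that edge rather than the crossing point, and omits both the opposite-sides argument and the affine-independence check that you supply.
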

\begin{proof}
By definition, an ($d$)-simplex is the convex combination of ($d+1$) vertices.
Additionally, every ($k$)-face is itself a ($k$)-simplex, which is just the simplex formed by the ($k+1$) vertices of the ($k$)-face.
Any splitting ($d-1$)-dimensional hyperplane that saturates a ($d-2$)-face contains its ($d-1$) vertices.
There are two remaining vertices, and these, by definition, can not be part of the splitting hyperplane.
This is because the hyperplane is itself a ($d-1$)-simplex and if it contained one of these additional two vertices, then it would saturate an entire ($d-1$)-face of the domain.
Therefore, we can consider these other two vertices as separate from the hyperplane.
We will refer to these as vertex A and vertex B.
The convex combination of vertices A and B forms a (1)-simplex (i.e., a (1)-face or 1-dimensional linear subspace that connects the vertices).
Let point C be any point in this (1)-simplex.
We can construct the following two new sets.
\begin{enumerate}
    \item The convex combination of the ($d-1$) vertices on the ($d-2$)-face, vertex A, and point C.
    \item The convex combination of the ($d-1$) vertices on the ($d-2$)-face, vertex B, and point C. Then take the set difference of this set with the previous set.
\end{enumerate}
Since both are convex combinations of ($d+1$) vertices, they are both, by definition, (d)-simplices.
\end{proof}

\subsubsection{A 2D Example}

The vertical split at $i=N$ illustrated previously in Figure~\ref{fig:split-reduction} is analogous to an SPB split in the sense that it is a strong boundary split through a vertex.
However, a parallelogram is not a simplex, of course.
Regardless, we can compute the constraint $i=N$ as follows.
First, compute the linear space, $H$, of the bottom right vertex of the parallelogram in Figure~\ref{fig:split-reduction}, as the intersection of the saturated constraints describing this vertex:
\begin{equation*}
H = \{ [i,j] \mid (j=0) \land (j=i-N) \}
\end{equation*}
Next, construct the relation characterizing the 1-dimensional basis, $b$, of the accumulation space from the kernel of the write function, $f_{p} = \{[i,j] \rightarrow [i] \}$ in this example, to obtain the relation:
\begin{equation*}
b = \{ [i,j] \rightarrow [i,j+1] \}
\end{equation*}
Then compute its transitive closure, $b^{*}$, to obtain the relation:
\begin{equation*}
b^{*} = \{ [i,j] \rightarrow [i,j'] \}
\end{equation*}
Then apply $b^{*}$ to the linear space of the vertex, $H$, to obtain the set:
\begin{equation*}
\{ [i,j] \mid i=N \}
\end{equation*}
Finally, the single equality constraint is used to characterize the splitting hyperplane $i=N$.
Extending $H$ by $b^{*}$ effectively removes one of the equality constraints because extending it increases its dimensionality by one.
Since we started from a set with two equality constraints (i.e., because $H$ came from a ($d-2$)-face), its extension is guaranteed to have a single equality constraint.
These operations are available in the integer set library, \texttt{isl}~\cite{verdoolaege_isl_2010}.

\subsubsection{A 3D Example}

In three dimensions, a simplex-preserving split is any plane passing through an edge of a tetrahedron.
Any such plane splits it into two tetrahedra (one of which may be empty) as illustrated in Figure~\ref{fig:tetrahedra-splits}.

\begin{figure}[h]
    \centering
    \begin{subfigure}{0.56\textwidth}
        \centering
        \includegraphics[width=0.8\textwidth]{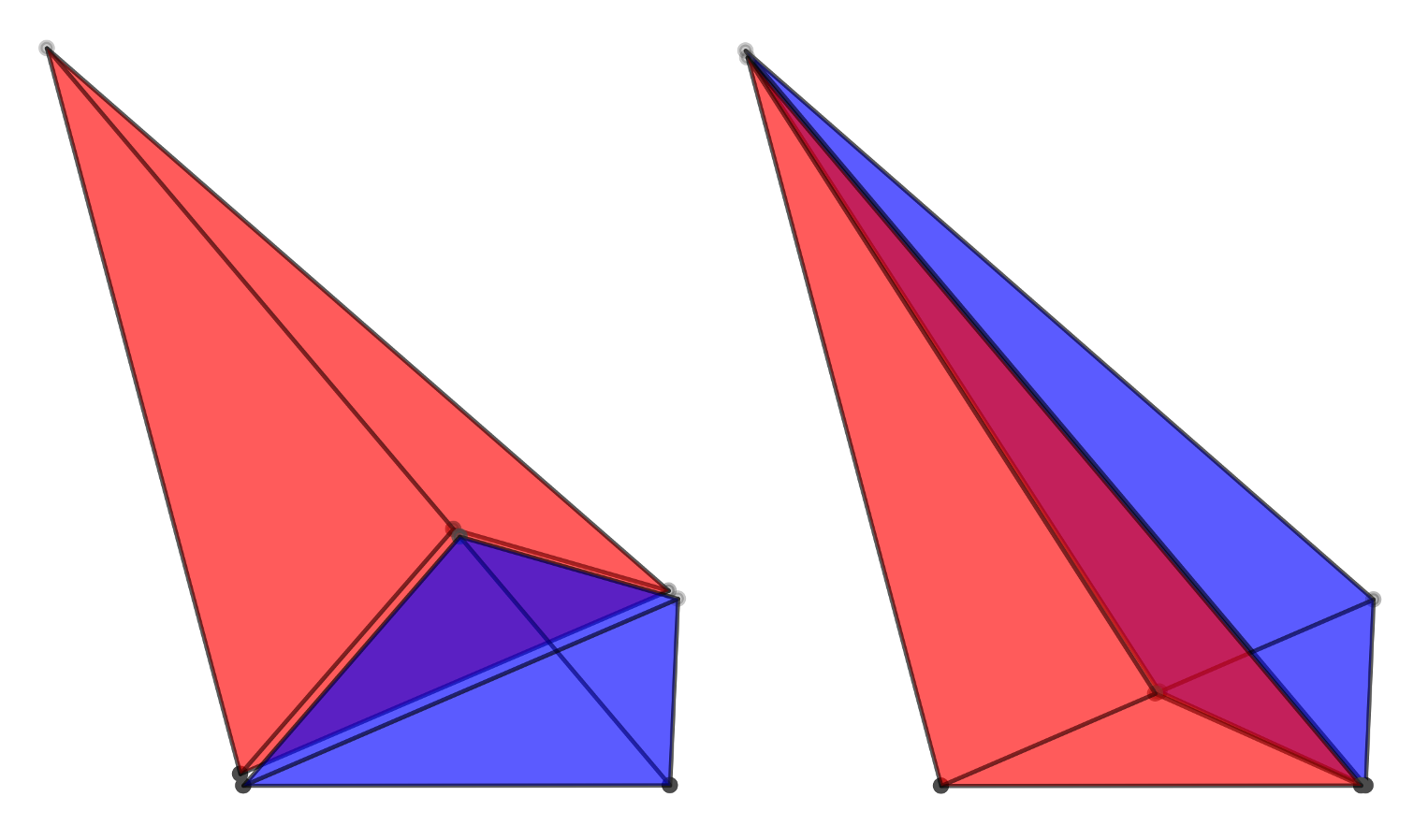}
        \caption{Two boundary splits, each through an edge.}
        \label{fig:tetrahedra-splits-a}
    \end{subfigure}%
    \hspace{5mm}
    \begin{subfigure}{0.2833\textwidth}
        \centering
        \includegraphics[width=0.8\textwidth]{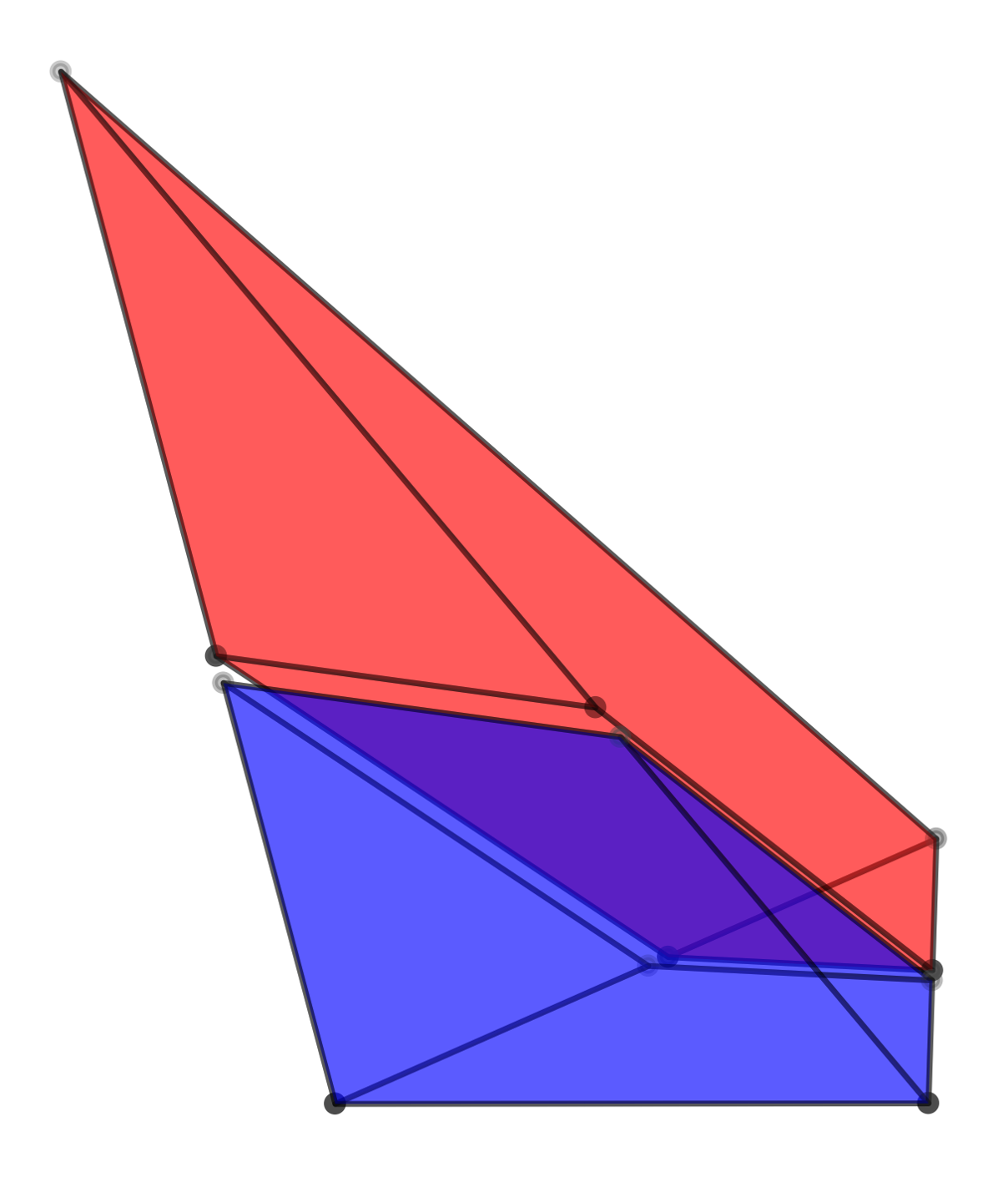}
        \caption{Arbitrary boundary split.}
        \label{fig:tetrahedra-splits-b}
    \end{subfigure}
    \caption{\textit{Tetrahedron}-preserving strong boundary splits (shown on left). Axes are not shown, but imagine that the accumulation space is 1D along the $k$-dimension pointing upwards. This means each split here is a strong boundary split per Section~\ref{sec:boundary-invariant-splits}, but if the split does not pass through an edge, then the resulting pieces are not guaranteed to be tetrahedra, as shown on the right. Splits passing through edges, however, result in two tetrahedra, as shown on the left. }
    
    \Description{Tetrahedra-preserving splits}
    \label{fig:tetrahedra-splits}
\end{figure}

\section{3-Dimensional and Higher Reductions} \label{sec:dD-reductions}

In this section, we consider the single-step simplification of Section~\ref{sec:background-single-step-simplification} for 3-dimensional and higher reductions where all possible labelings involve at least one pair of oppositely labeled facets.
Therefore, splitting the reduction body may be required.
Our goal is to show that pairs of oppositely labeled residual facets can always be avoided.
We will show that all but three residual facets can \textit{either} be transformed into strong boundary facets by reduction decomposition \textit{or} ignored by choosing a reuse vector that labels them as $\oslash$-faces.
Two of the remaining three residual facets may involve opposite labels.
Whenever this happens, we will split the domain so that all $\oplus$-faces are on one side of the split and all $\ominus$-faces are on the other. 
Then, in the piece containing the $\ominus$-face(s), we will negate $\halfvec{\rho}$ to view them as $\oplus$-faces.

We rely on the fact that any ($d-1$)-face has a non-trivial intersection with subsets of the accumulation and reuse space oriented along the canonical axes, per Lemma~\ref{lemma:nti}.

\begin{lemma} \label{lemma:nti}
The linear space of any ($d-1$)-face has a non-trivial intersection with the linear subspace spanned by any two or more canonic axes.
\end{lemma}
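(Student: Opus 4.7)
The plan is to prove the lemma by a direct application of the Grassmann dimension formula for the intersection of two linear subspaces of $\mathbb{R}^d$. The key observation is that, by assumption in Section~\ref{sec:assumption-canonic-axes}, the ambient space has dimension $d$, and the linear space of a $(d-1)$-face is, by definition, a linear subspace of dimension exactly $d-1$ (it is the translation to the origin of the affine hull of the facet). A subspace $U$ spanned by $k$ canonical axes has dimension $k$, so the hypothesis gives $\dim(U) \geq 2$.

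The core of the argument is then a one-line dimension count. Let $H$ denote the linear space of the $(d-1)$-face and let $U$ be the subspace spanned by the chosen $k \geq 2$ canonical axes. Both $H$ and $U$ are linear subspaces of $\mathbb{R}^d$, so $H + U \subseteq \mathbb{R}^d$ and therefore $\dim(H + U) \leq d$. Applying the standard identity
\begin{equation*}
\dim(H \cap U) \;=\; \dim(H) + \dim(U) - \dim(H + U),
\end{equation*}
we obtain $\dim(H \cap U) \geq (d-1) + k - d = k - 1 \geq 1$, so the intersection contains a nonzero vector and is non-trivial.

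There is essentially no conceptual obstacle; the only subtlety to be careful about is notational. One must clarify that the ``linear space'' of a $(d-1)$-face refers to the linear hull of its direction vectors (equivalently, the kernel of the single effectively saturated affine constraint defining the facet), which is always $(d-1)$-dimensional regardless of any parametric offset. Once this is stated, the result is immediate, and it provides exactly what Section~\ref{sec:dD-reductions} needs: every $(d-1)$-face meets any two-or-more-dimensional coordinate subspace (in particular, subsets of the accumulation and reuse spaces, which are oriented along canonical axes by the assumption of Section~\ref{sec:assumption-canonic-axes}) in at least a line.
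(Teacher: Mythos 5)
Your proof is correct and takes essentially the same approach as the paper: both are dimension counts, with the paper tallying equality constraints ($1$ for the facet's linear space, $d-q$ for the coordinate subspace, hence a $(q-1)$-dimensional intersection) while you invoke the Grassmann formula directly. Your version is marginally more careful, since the inequality $\dim(H+U)\leq d$ covers the degenerate case where the facet's constraint is not independent of the coordinate subspace's constraints, whereas the paper implicitly assumes independence; either way the conclusion holds.
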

\begin{proof}
The ($d-1$)-dimensional linear space describing any ($d-1$)-face involves $d$ indices and one equality constraint.
The linear subspace spanned by $q$ canonic axes involves $d$ indices and $(d-q)$ equality constraints and is $q$-dimensional.
Their intersection involves $(1+d-q)$ equality constraints.
Therefore, it is ($q-1$)-dimensional.
When $q>1$, this intersection is non-trivial.
\end{proof}

\subsection{Avoid Some Residual Facets With Reduction Decomposition}

The reduction decomposition transformation described in Section~\ref{sec:reduction-decomposition} is useful because it has the effect of transforming a weak boundary into a strong boundary.
This happens because the accumulation space of the inner reduction (i.e., the null space of $f_{p}'$ in Equation~\ref{eq:decomp-def}) is strictly smaller after decomposition.
Recall that a weak boundary, per Definition~\ref{def:weak-boundary}, is a facet that partially intersects the accumulation space.
By constructing the inner accumulation space to only involve the dimensions contained by a weak facet, the facet becomes a strong boundary in the inner reduction.
Multiple weak boundary facets can simultaneously be transformed into strong boundary facets by constructing the inner accumulation as the subset of the accumulation space that they share.

\begin{lemma} \label{lemma:projdec}
 Given $a$ dimensions of accumulation, any set of $a$ residual facets can be transformed into $a-1$ strong boundary facets via reduction decomposition.
\end{lemma}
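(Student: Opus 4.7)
The plan is to apply the reduction decomposition transformation of Section~\ref{sec:reduction-decomposition} with a carefully chosen inner write function $f_{p}'$ so that $a-1$ of the given residual constraints become strong boundaries in the \emph{inner} reduction. Concretely, I would pick any $a-1$ of the $a$ residual facets, with defining constraints $c_{i_{1}},\ldots,c_{i_{a-1}}$, and define the candidate inner accumulation space as
\[
    \mathcal{A}' \;:=\; \mathcal{A} \cap \mathrm{ker}(c_{i_{1}}) \cap \cdots \cap \mathrm{ker}(c_{i_{a-1}}).
\]

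Next, I would construct $f_{p}'$ as any linear map with $\mathrm{ker}(f_{p}') = \mathcal{A}'$, together with an outer map $f_{p}''$ so that $f_{p} = f_{p}'' \circ f_{p}'$. This factorization always exists because $\mathcal{A}' \subseteq \mathcal{A} = \mathrm{ker}(f_{p})$, so $f_{p}$ factors through any linear map whose kernel is $\mathcal{A}'$ by elementary linear algebra. Plugging this pair into Equation~\ref{eq:decomp-def} yields an inner reduction whose accumulation space equals $\mathcal{A}'$. For each selected $c_{i_{j}}$, by construction $\mathcal{A}' \subseteq \mathrm{ker}(c_{i_{j}})$, hence $\mathrm{rank}\bigl(\mathcal{A}' \cap \mathrm{ker}(c_{i_{j}})\bigr) = \mathrm{rank}(\mathcal{A}')$, which satisfies Definition~\ref{def:strong-boundary} applied to the inner reduction.

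The main obstacle will be verifying that $\mathcal{A}'$ is non-trivial (positive-dimensional), since otherwise the decomposition is degenerate and the inner reduction vanishes. This is where the residuality hypothesis does all the work. Because each $c_{i_{j}}$ defines a residual facet, its linear part restricts to a \emph{nonzero} linear functional on $\mathcal{A}$, so $\mathcal{A} \cap \mathrm{ker}(c_{i_{j}})$ is a codimension-one hyperplane \emph{inside} $\mathcal{A}$. Intersecting $\mathcal{A}$ with $a-1$ such hyperplanes therefore cuts the dimension by at most $a-1$ (since $a-1$ functionals on an $a$-dimensional space can span at most an $(a-1)$-dimensional subspace of the dual), leaving $\dim(\mathcal{A}') \geq a - (a-1) = 1$. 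This also explains why the statement stops at $a-1$ rather than $a$: intersecting with all $a$ residual kernels could in the worst case collapse $\mathcal{A}'$ to $\{0\}$, rendering the inner reduction meaningless. With $\mathcal{A}'$ guaranteed to have dimension at least one, the reduction decomposition is legal, and the chosen $a-1$ facets are simultaneously turned into strong boundaries of the inner reduction, as required.
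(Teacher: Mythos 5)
Your proposal is correct and follows essentially the same route as the paper: both intersect the accumulation space with the (kernels of the) constraints of $a-1$ chosen residual facets, observe that residuality forces each such intersection to drop the dimension by at most one so the result is at least $1$-dimensional, and then decompose the reduction so that the inner accumulation space is exactly this intersection, making the chosen facets strong boundaries of the inner reduction. Your write-up is in fact slightly more explicit than the paper's on two points it leaves implicit --- why the factorization $f_{p}=f_{p}''\circ f_{p}'$ exists, and why the dimension count is a lower bound rather than an exact equality --- but the underlying argument is the same.
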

\begin{proof}
Let $[\mathcal{F}_{1}, \mathcal{F}_{2}, ..., \mathcal{F}_{k}]$ be a list of $k$ residual facets.
Take the first residual facet, $\mathcal{F}_{1}$, and let $\mathcal{A}_{1}$ be the intersection of its linear subspace $\mathcal{H}_{1}$ and the $a$-dimensional accumulation space $\mathcal{A}$,
\begin{equation}
    \mathcal{A}_{1} = \mathcal{H}_{1} \cap \mathcal{A}
\end{equation}
$\mathcal{A}_{1}$ is a ($a-1$)-dimensional subspace per Lemma ~\ref{lemma:nti}.
Now let $\mathcal{A}_{p-1}$ denote the intersection of the first ($p-1$) residual faces and the accumulation space,
\begin{equation}
    \mathcal{A}_{p-1} = \mathcal{H}_{1} \cap \mathcal{H}_{2} \cap ... \cap\mathcal{H}_{p-1} \cap \mathcal{A}
\end{equation}
$\mathcal{A}_{p-1}$ is a ($a-p+1$)-dimensional subspace.
Now we decompose the reduction, per Section~\ref{sec:reduction-decomposition}, where the inner reduction's accumulation space is precisely along $\mathcal{A}_{p-1}$.
This is done for up to $a$ residual facets since $\mathcal{A}_{p-1}$ is at least 1-dimensional when $p=a$.
The inner reduction now has a 1D accumulation space, and $p-1$ of the residual facets are subsequently strong boundaries (i.e., non-residual).
\end{proof}

\subsection{Avoid Some Residual Facets With Appropriate Reuse Selection}

Any facets labeled as an $\oslash$-face can be ignored.
Multiple facets can be labeled as an $\oslash$-face by selecting a reuse vector orthogonal to all of the facet normal vectors.
The following proof is analogous to that of Theorem~\ref{lemma:projdec}.

\begin{lemma} \label{lemma:depdec}
 Given $r$ dimensions of reuse, any set of $r-1$ residual facets can be labeled as $\oslash$-faces by choosing a reuse vector in their combined intersection with the reuse space.
\end{lemma}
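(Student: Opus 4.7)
The plan is to mirror the inductive construction of Lemma~\ref{lemma:projdec}, but replacing the accumulation space $\mathcal{A}$ with the reuse space $\mathcal{R}$ and exploiting the fact that a reuse vector $\halfvec{\rho}$ labels a facet $\mathcal{F}_i$ as an $\oslash$-face precisely when $\halfvec{\rho} \cdot \halfvec{\nu_i} = 0$, i.e.\ when $\halfvec{\rho} \in \mathrm{ker}(c_i)$. Given a list of residual facets $\mathcal{F}_1, \ldots, \mathcal{F}_{r-1}$ with linear spaces $\mathcal{H}_1, \ldots, \mathcal{H}_{r-1}$, I would form the nested chain
\[
\mathcal{R}_p \;=\; \mathcal{R} \cap \mathcal{H}_1 \cap \mathcal{H}_2 \cap \cdots \cap \mathcal{H}_p
\]
and argue by induction on $p$ that $\dim(\mathcal{R}_p) \geq r - p$. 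The required reuse vector is then any non-zero element of $\mathcal{R}_{r-1}$.

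For the base case ($p=1$), Lemma~\ref{lemma:nti} applied with $q = r$ yields $\dim(\mathcal{R}_1) \geq r - 1$, using the assumption of Section~\ref{sec:assumption-canonic-axes} that $\mathcal{R}$ is spanned by the first $r$ canonical axes. For the inductive step, each $\mathcal{H}_{p+1}$ is a hyperplane cut out by a single linear equality, so intersecting with $\mathcal{R}_p$ can drop its dimension by at most one, giving $\dim(\mathcal{R}_{p+1}) \geq \dim(\mathcal{R}_p) - 1$. Chaining these bounds yields $\dim(\mathcal{R}_{r-1}) \geq 1$, so the intersection contains a non-zero vector.

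Choosing $\halfvec{\rho}$ to be any such vector, it is a legitimate reuse vector since $\halfvec{\rho} \in \mathcal{R}$, and since $\halfvec{\rho} \in \mathrm{ker}(c_i)$ for each $i$ we have $\halfvec{\rho} \cdot \halfvec{\nu_i} = 0$. By the labeling rules of Section~\ref{sec:background-equivalence-classes}, every one of the $r-1$ given residual facets is therefore labeled as an $\oslash$-face under this choice of $\halfvec{\rho}$. The degenerate case $r = 1$ is vacuous, since the lemma then requires labeling zero facets.

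The only delicate point is the base case, which relies crucially on the reuse space being axis-aligned so that Lemma~\ref{lemma:nti} applies; without that alignment, the linear part of a single residual facet could intersect $\mathcal{R}$ only in $\{\mathbf{0}\}$ and the chain would collapse immediately. The inductive step is the standard codimension-additivity bound for intersecting linear subspaces and presents no real difficulty. Note that we do not need to track when the dimension drop is strict versus non-strict: residual facets that happen to contain $\mathcal{R}$ entirely (weak invariants aside) contribute no reduction to the dimension count, which only helps, and the pessimistic bound of losing one dimension per intersection is already sufficient to conclude.
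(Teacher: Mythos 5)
Your proof is correct and follows essentially the same route as the paper's: the same nested intersection chain $\mathcal{R}_{p} = \mathcal{R} \cap \mathcal{H}_{1} \cap \cdots \cap \mathcal{H}_{p}$, the same dimension count, and the same conclusion that any nonzero $\halfvec{\rho}$ in the final intersection labels all $r-1$ facets as $\oslash$-faces. If anything, your version is slightly more careful than the paper's, which asserts the exact dimension $r-p+1$ outright, whereas your lower bound $\dim(\mathcal{R}_{p}) \geq r-p$ is all that is actually needed and is justified by the standard codimension argument.
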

\begin{proof}
Let $[\mathcal{F}_{1}, \mathcal{F}_{2}, ..., \mathcal{F}_{k}]$ be a list of $k$ residual facets.
Take the first facet $\mathcal{F}_{1}$ and let $\mathcal{R}_{1}$ be the intersection of its linear space $\mathcal{H}_{1}$ with the $r$-dimensional reuse space $\mathcal{R}$,
\begin{equation}
    \mathcal{R}_{1} = \mathcal{H}_{1} \cap \mathcal{R}
\end{equation}
$\mathcal{F}_{1}$ is an ($r-1$)-dimension subspace.
Let $\mathcal{R}_{p-1}$ denote the intersection of the first ($p-1$) residual facets and the reuse space,
\begin{equation}
    \mathcal{R}_{p-1} = \mathcal{H}_{1} \cap \mathcal{H}_{2} \cap ... \cap \mathcal{H}_{p-1} \cap \mathcal{R}
\end{equation}
$\mathcal{R}_{p-1}$ is a ($r-p+1$)-dimensional subspace.
Any reuse vector $\halfvec{\rho} \in \mathcal{R}_{p-1}$ is orthogonal to the normal vectors of all $p-1$ facets and therefore labels all as $\oslash$-faces.

\end{proof}

\subsection{All Possible Scenarios}

Now consider an arbitrary reduction over a $d$-dimensional simplicial domain with $a$ dimensions of accumulation and $r$ dimensions of reuse.
There are $d+1$ facets on the reduction body because it is a simplex.
We just showed, in the previous subsections, how to avoid up to $(a-1)+(r-1) = d-2$ of them using reduction decomposition and an appropriate choice of reuse.
Consequently, there are only two possible scenarios that must be considered.

\subsubsection{Base Case: $d-1$ or Fewer Residual Facets}

In this case, $a-1$ residual facets can be transformed into strong boundaries per Lemma ~\ref{lemma:projdec} and $r-1$ facets can be labeled as $\oslash$-faces per Lemma ~\ref{lemma:depdec}.
This leaves at most one remaining residual facet because $(d-1)-(a-1)-(r-1) \leq 1$.
At least two residual facets must be present for simplification to fail; therefore, simplification succeeds in this case.

\subsubsection{General Case: $d$ or More Residual Facets} \label{sec:general-dD-case}
Like the base case, $a-1$ residual facets can be transformed into strong boundaries, and $r-1$ facets can be labeled $\oslash$-faces.
In this case, however, there can be up to three remaining residual facets since $(d+1)-(a-1)-(r-1) \leq 3$.
Among the remaining residual facets, two of them may involve opposite labels.
In other words, there may be one $\oplus$-face and two $\ominus$-faces, or two $\oplus$-faces and one $\ominus$-face.
In this case, a single SPB or SPI split can be made through one of their ($d-2$)-faces to separate the conflicting facets.

\section{2-Dimensional Reductions: Fractal Simplification of Triangles} \label{sec:2D-reductions (triangles)}

Only one type of reduction can occur in two dimensions: a reduction with a 1D accumulation and 1D reuse space.
Per Section~\ref{sec:assumption-canonic-axes}, let the reuse space be oriented horizontally along the $i$-axis and accumulation vertically along the $j$-axis.
This means that vertical and horizontal edges are boundary and invariant edges.

There are only three types of triangles that can occur:
\begin{enumerate}
    \item 1 residual edge (i.e., a right triangle, which is an instance of a standard scan, e.g., Section~\ref{sec:motiv-prefix-max})
    \item 2 residual edges (some of which require fractal simplification)
    \item 3 residual edges, (can be split into two disjoint instances, each with 2 residual edges)
\end{enumerate}

\noindent The following sections describe how to make 2-dimensional versions of SBP or SBI splits, previously described in Section~\ref{sec:simplex-preserving-splits}, to obtain instances of the first case.

\subsection{Base Case: Right Triangles}

Any triangle with only one residual edge can always be simplified.  
If the residual edge monotonically increases, we exploit reuse along $\halfvec{\rho} = \langle 1,0\rangle$.  
This yields a standard scan (e.g., prefix-max, prefix-min, etc.).  
Otherwise, we exploit reuse along $\langle -1,0\rangle$, producing a backward scan (suffix-max, suffix-min, etc.)

\subsection{Two Residual Edges}

In this case, there is either one boundary (vertical) or one invariant (horizontal)  edge.  Let the three vertices of the triangle be,  $\{\mathcal{V}_{0}, \mathcal{V}_{1}, \mathcal{V}_{2}\}$, of which, $\mathcal{V}_{0}$, which we call the corner vertex, is the intersection of the two residual edges.  There are two sub-cases depending on the relative orientation of the vertices.

\subsubsection{Covered Corner} \label{sec:covered}

Let the corner vertex (the vertex between the two residual edges) be called \textit{covered} if it lands between the other two vertices when all vertices are projected onto either of the axes.
As illustrated by the red point in Figure~\ref{fig:fig:third-vertex-covered}, and by the definition of $\mathcal{V}_0$ being covered, its projection on the other edge is a point on that edge.  Hence, an SPB or SPI split through  $\mathcal{V}_{0}$ yields two right triangles.  
One can be simplified as a forward-scan and the other as a backward-scan.

\subsubsection{Corner Vertex Is Not Covered} \label{sec:uncovered}
There is no loss of generality in assuming that the entire triangle is in the positive quadrant and the corner vertex is at the origin (this can be accomplished by a simple reindexing of one or both of the input/output arrays).  Let us first consider that the non-residual edge is vertical, and let $\mathcal{V}_1$ be \emph{below} $\mathcal{V}_2$.  If $n$ is sufficiently large, we make one horizontal cut through the lower vertex, $\mathcal{V}_1$, and let $\mathcal{V}_2'$ be its intersection with the upper residual edge.  Next, we make a vertical cut through $\mathcal{V}_2'$, intersecting the lower residual edge at, say, $\mathcal{V}_1'$.  We now have three triangles,
$\Delta_1= [\mathcal{V}_0, \mathcal{V}_1',\mathcal{V}_2']$,
$\Delta_2= [\mathcal{V}_1',\mathcal{V}_2', \mathcal{V}_1]$
and $\Delta_3= [\mathcal{V}_2', \mathcal{V}_1, \mathcal{V}_2]$.  Of these, the latter two are right triangles, and $\Delta_1$ is geometrically similar to the original triangle as illustrated on the left side of Figure~\ref{fig:fig:third-vertex-uncovered}.

\begin{figure*}[h]
  \centering
  \includegraphics[width=0.8\linewidth]{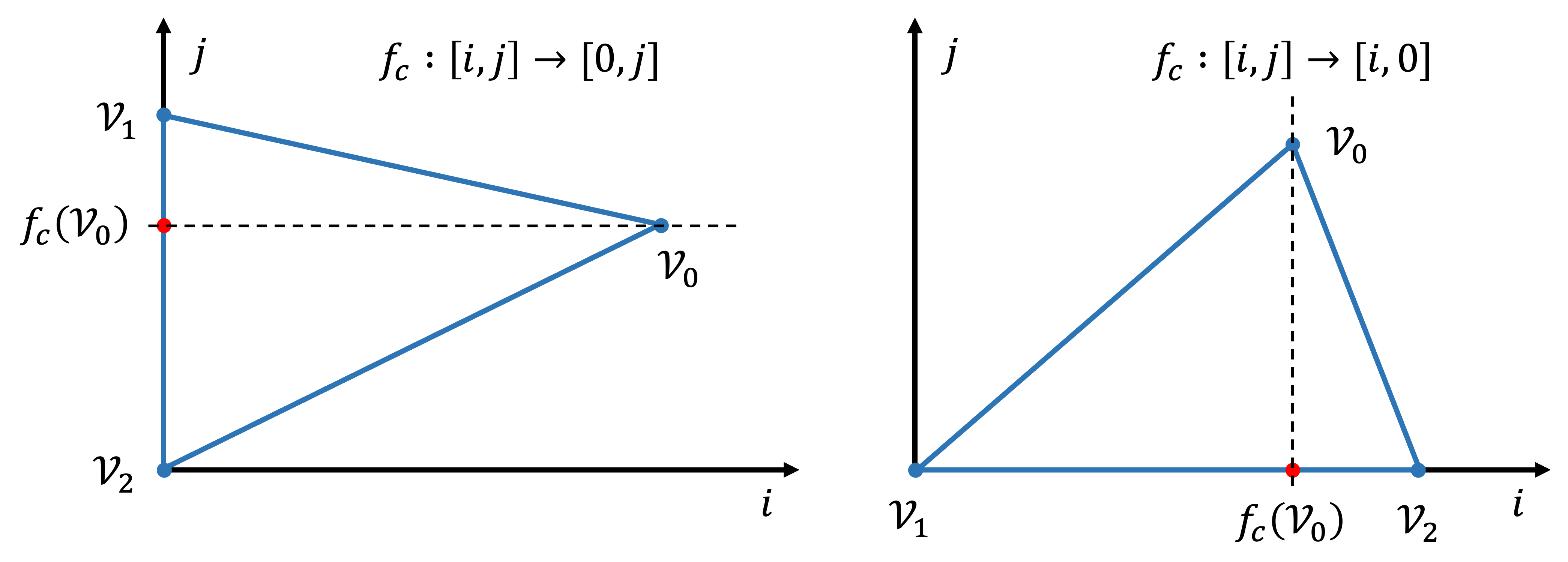}
\caption{Covered corner vertex:  The projection of  $\mathcal{V}_0$ (shown by the red point) is between $\mathcal{V}_1$ and$\mathcal{V}_2$.  A single horizontal or vertical cut through the corner produces two right triangles.}
\label{fig:fig:third-vertex-covered}
\end{figure*}

\begin{figure*}[h]
  \centering
\includegraphics[width=0.8\linewidth]{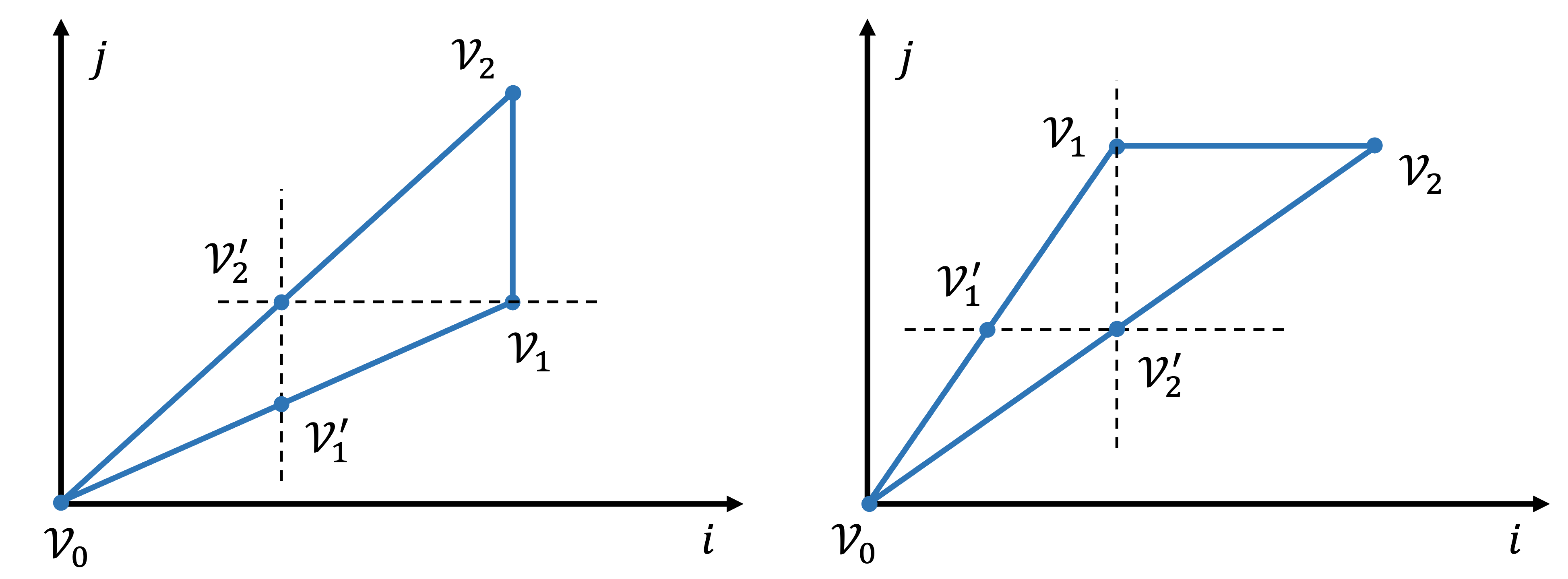}
\caption{Non-covered corner vertex: $\mathcal{V}_{0}$ lands \textit{outside} of the other two vertices when projected onto the linear space of the vertical edge (left) or when projected onto the space of the invariant edge (right). Each case requires a single vertical and horizontal cut to produce two good right triangles in the base case and a third triangle, which is a homothetic scaling of the original triangle.}
\label{fig:fig:third-vertex-uncovered}
\end{figure*}

\begin{theorem}
\label{th:triangle}
The reduction over $[\mathcal{V}_{0}, \mathcal{V}_{1}, \mathcal{V}_{2}]$ can be simplified, i.e., computed with $O(1)$ reduction operations per element of the answer variable $Y$.
\end{theorem}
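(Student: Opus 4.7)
The plan is to prove Theorem~\ref{th:triangle} by splitting into the two sub-cases already described, namely the covered corner (Section~\ref{sec:covered}) and the uncovered corner (Section~\ref{sec:uncovered}), handling each via the cuts already proposed and then accounting for the work in the uncovered case via a geometric-series argument.

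For the covered case, I would observe that the single horizontal or vertical cut through $\mathcal{V}_0$ is, by construction, a valid SPI or SPB split, so by Section~\ref{sec:boundary-invariant-splits} each resulting piece retains its residual structure and is a right triangle. Each such right triangle has exactly one residual edge and thus falls into the base case from the start of Section~\ref{sec:2D-reductions (triangles)}: it is a forward- or backward-scan, requiring $O(1)$ reduction operations per answer element it contributes to. Summing via the three-branch split reduction of Equation~\ref{eq:split-reduction} yields the claimed $O(1)$ per element of $Y$.

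For the uncovered case, I would first verify that the two cuts are valid: the horizontal cut through $\mathcal{V}_1$ is an SPI split (its normal lies along the reuse direction) and the vertical cut through $\mathcal{V}_2'$ is an SPB split (its normal lies along the accumulation direction), so again each piece's residual count is preserved. The triangles $\Delta_2=[\mathcal{V}_1',\mathcal{V}_2',\mathcal{V}_1]$ and $\Delta_3=[\mathcal{V}_2',\mathcal{V}_1,\mathcal{V}_2]$ each acquire one horizontal and one vertical edge by construction, leaving exactly one residual edge, so both are handled by the base case as scans. The geometric heart of the argument is that $\Delta_1=[\mathcal{V}_0,\mathcal{V}_1',\mathcal{V}_2']$ is homothetic to the original triangle: the two new vertices $\mathcal{V}_1'$ and $\mathcal{V}_2'$ lie on the two residual edges emanating from $\mathcal{V}_0$, at the same fractional distance along each, so $\Delta_1$ is a scaled-down copy of the original with some factor $\lambda\in(0,1)$ determined by the slopes of the two residual edges.

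The recursion then applies the same construction to $\Delta_1$, spawning an infinite descending chain of similar triangles, each at scale $\lambda$ times the previous. At level $k$ the similar triangle projects onto $\lambda^k N$ answer elements, and the two right triangles spun off at that level together contribute $O(\lambda^k N)$ reduction operations via the scan base case. Summing the geometric series gives total work $\sum_{k\geq 0} O(\lambda^k N)=O(N/(1-\lambda))=O(N)$, which is $O(1)$ per element of $Y$, as required. The main obstacle I anticipate is the rigorous justification that $\lambda$ is bounded strictly away from $1$ by a quantity independent of the parametric size $N$ (so that the series converges with a constant in the big-$O$ that does not depend on $N$); this should follow from the fact that the slopes of the two residual edges are fixed by the input program and the uncovered assumption forces both $\mathcal{V}_1'$ and $\mathcal{V}_2'$ to be strictly interior to their edges, but making this uniform across all uncovered configurations requires some care. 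Termination of the recursion, once the similar triangle shrinks to contain $O(1)$ integer points, is handled by directly evaluating the reduction at that leaf.
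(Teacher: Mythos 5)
Your proposal is correct and follows essentially the same route as the paper: the same covered/uncovered case split, the same horizontal-then-vertical cuts producing two right-triangle scans plus a homothetic copy $\Delta_1$, and the same recursion bottoming out on constant-size triangles. The only real difference is presentational---you sum an explicit geometric series where the paper uses a one-line induction on the shrinking copy---and the uniform bound $\lambda < 1$ that you flag as the main obstacle is settled by the paper's remark immediately following the theorem that the vertices are fixed affine functions of the single size parameter $n$, so the scaling ratio is a constant determined by the coefficients $a$, $b$, $b'$ and is independent of $n$.
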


\begin{proof}
By Induction.
As the base case, consider $n\leq c$ for some constant $c$.  For such sufficiently small triangles, computing each element of $Y$ needs only $O(1)$ reduction operations.
In general, when $n$ is sufficiently large, let us assume by induction that $\Delta_1$
can be simplified to compute the section of the result until $\mathcal{V}_1'$ in $O(1)$ time per element of $Y$.  Another backward scan and a forward scan on the appropriate section of $X$ will yield the next section of $Y$ from $\mathcal{V}_1'$ to $\mathcal{V}_1$.
%\qed
% \end{proof} adds the qed box
\end{proof}

The above argument can be carried over mutatis mutandis to the case where the non-residual edge is horizontal.  Indeed, as illustrated on the right side of Figure~\ref{fig:fig:third-vertex-uncovered}, the first cut of the previous case yields the second one.

The proof of Theorem~\ref{th:triangle} suggests a simple code generation strategy (again, explained for the case when the non-residual edge is vertical).  Note that vertices $\mathcal{V}_{1}$ and $\mathcal{V}_{2}$ are given as affine functions of the parameter, $n$, and this is known statically.  Let 
Let $\mathcal{V}_{1} = [an, bn]$ and $\mathcal{V}_{2}  = [an, b'n]$ for some scalars $a, b$ and $b'$.
We can compute the values of $a_1, b_1$ and $b'_1$ the factors that specify $\mathcal{V}_1'$ and $\mathcal{V}_2'$, and from that, the scaling factor, $\frac{a}{a_0}=\frac{b}{b'_0}=\frac{b'}{b'_0}$.
This leads to the code structure shown in Figure~\ref{fig:codegen}.
It is important to note that this code is recursive, not polyhedral, and takes us out of the polyhedral model.
Nevertheless, all our analyses are polyhedral, and in order to generate it, we need only a fixed number of splits.
Also, note that the recursive function calls can be optimized using standard tail recursion optimization techniques.

%\begin{wrapfigure}[18]{R}{0.6\textwidth}
\begin{figure}[h]
\begin{lstlisting}[style=CStyle, frame=none]
void fractal(int *Y, int *X, int L, int U) {
    if (U < threshold) {                       // do the full input reduction
        for (i=L; i<U; i++)
            for (j=i; j<=2*j; j++)
                Y[i] = max(Y[i], X[j]);
        return;
    }
    for (i=U; i>=U/2; i--)                     // backward scan on U/2<=i<=U
        Y[i] = max(Y[i+1], X[i]);
    for (i=U/2; i<=U; i++)                     // foward scan on U/2<=i<=U
        Y[i] = max(Y[i-1], X[2*i], X[2*i-1]);
    fractal(Y, X, L, U/2);                     // recurse on L<=i<U/2
} 
\end{lstlisting} 
% \begin{lstlisting}[style=CStyle]
% void fractal(int *Y, int *X, int L, int U) {
%     if (U < threshold) {                       // do the full input reduction
%         for (i=L; i<U; i++)
%             for (j=i; j<=2*j; j++)
%                 Y[i] = max(Y[i], X[j]);
%         return;
%     }
%     for (i=U; i>=U/2; i--)                     // backward scan on U/2<=i<=U
%         Y[i] = max(Y[i+1], X[i]);
%     for (i=U/2; i<=U; i++)                     // foward scan on U/2<=i<=U
%         Y[i] = max(Y[i-1], X[2*i], X[2*i-1]);
%     fractal(Y, X, L, U/2);                     // recurse on L<=i<U/2
% } 
% \end{lstlisting} 
\vspace{-2mm}
\caption{Recursive pseudo-code for fractal simplification of the example from Section~\ref{sec:motiv-i-to-2i}. The complete working code can be found in the artifact~\cite{narmour_maximal_2024}. }
\label{fig:codegen}
\end{figure}
%\end{wrapfigure}
%\Todo{The code is not recursive.  But it is not affine}
\subsection{Three Residual Edges}

In this case any vertical cut through a covered vertex when projected onto the $i$-axis will produce two triangles in case 2.  Alternatively, and equivalently, any horizontal cut through a covered vertex when projected on the $j$-axis will achieve the same thing.

\section{Simplification With Splitting} \label{sec:implementation}

In this section, we summarize the extended optimal simplification algorithm that incorporates our previously proposed splitting techniques.

\subsection{Maximal Simplification Algorithm}\label{sec:new-algorithm}

Given an input reduction in the form of Equation~\ref{eq:reduction-def}, the maximal optimal simplification algorithm is summarized below in Algorithm~1.
This should be understood precisely as the dynamic programming algorithm of GR06 (Algorithm 2 in their work) with an additional dynamic programming step to carry out splitting when necessary.
Additionally, we require that the input reduction be initially preprocessed and separated into the union of disjoint simplices.
We assume that has already been done or viewed as an additional preprocessing step as needed.

\SetKwComment{Comment}{/* }{ */}
\begin{algorithm} 
    \caption{Maximal Optimal Simplification}
    \KwIn{$d$-dimensional reduction with $a$-dimensional accumulation and $r$-dimensional reuse. Assume that the body is a simplex and that the other preprocessing steps of GR06 have been done to expose the $r$ dimensions of reuse.}
    \KwOut{Equivalent optimal $\big(d-\min(a,r)\big)$-dimensional reduction(s)}
    \nl \While{there are residual reductions with reuse}{
        \nl \ForEach{residual reduction with body $\mathcal{D}$}{
            \nl Construct the set of candidate single-step simplifications among facets of $\mathcal{D}$ \\
            \nl Construct the set of candidate reduction decompositions to transform one (or more) facets of $\mathcal{D}$ into strong boundaries \\
            \nl \If{there are no possible single-step simplification and decomposition candidates}{
               \nl Construct the set of candidate SPB and SPI splits \\
            }
            \nl Use dynamic programming to optimally choose: \\
            {
                \nl \hspace{4mm}(a) A single-step simplification of $\mathcal{D}$ along candidate $\halfvec{\rho}$ \\
                \nl \hspace{4mm}(b) A reduction decomposition candidate \\
                \nl \hspace{4mm}(c) A SPB or SPI candidate split to produce two new residual reductions
            }
        }
    }
    % \nl
    % \nl Enumerate the equivalence classes from the reuse space and the children of $\mathcal{F}$ \\
    % \nl Enumerate the candidate reduction decompositions \\
    % \nl Split
    \label{alg:msr}
\end{algorithm}

As stated previously, arbitrarily splitting reductions can lead to an endless loop; however, we can guarantee termination by only making simplex preserving strong boundary (SPB) and invariant (SPI) splits.
Furthermore, we only do so when no other valid candidate choices exist.
This is because the dynamic programming algorithm enumerates a finite number of choices based on the number of facets at the target reductions. Such splits strictly reduce the number of choices that need to be explored.
Each candidate split in step 6 in Algorithm~1 consists either of a single split, for 3-dimensions and higher per Section~\ref{sec:dD-reductions}, or two splits that expose a repeating pattern, in 2-dimensions per Section~\ref{sec:2D-reductions (triangles)}.

\subsection{Implementation}

We provide a proof-of-concept implementation of the individual components of our approach using the Alpha language~\cite{mauras_alpha_1989, le_verge_alpha_1991} and the AlphaZ system~\cite{yuki_alphaz_2013}.
The Alpha language is an equational language that separates the specification of a program from its execution plan.
Additionally, it supports modeling reduction operations as first-class objects with explicit representations of the write and read functions, $f_{p}$ and $f_{d}$, characterizing the accumulation and reuse space, respectively.
Program variable domains are represented as polyhedral sets using \texttt{isl}~\cite{verdoolaege_isl_2010} (the integer set library), which naturally supports the constraint representations we use to describe our splitting hyperplanes.

Our results in this paper are primarily theoretical.  We provide, in our artifact~\cite{narmour_maximal_2024}, complete working code for the components of the maximal simplification algorithm, including constructing the set of all possible labelings and corresponding candidate splits.  A complete push-button tool that implements the extended dynamic programming algorithm and automatically and optimally simplifies any reduction requires addressing several practical issues. Specifically, it is necessary to handle two issues: managing the combinatorially large number of possible solutions and developing methods to compare them using the constant factors, and the existence of parallel schedules with scalable locality. Putting all this together is outside the scope of this paper and left as future work.

This can be used to produce simplified Alpha programs for all of the examples discussed above.
Additionally, AlphaZ can be used to generate C code that can be subsequently compiled and run.
More information can be found in the accompanying artifact~\cite{narmour_maximal_2024}.

\subsection{A Complete Higher Dimensional Example}

Consider the $O(N^2)$ simplification of the following $O(N^4)$ reduction over the simplicial domain $\mathcal{D} = \{ [i,j,k,l] \; | \; j \leq N  \; \mathrm{and} \;  i \leq k \leq 2i \; \mathrm{and} \; i+j \leq l \leq 2j \}$ with 2-dimensional accumulation and reuse,
\begin{equation} \label{eq:ex-3-alpha}
Y[i,j] = \max_{[i,j,k,l] \in \mathcal{D}} X[k,l] 
\end{equation}
This is an interesting example because our splitting extensions are necessary, and among all possible simplification choices, with splitting, the fractal simplification step of Section~\ref{sec:2D-reductions (triangles)} is unavoidable.
Since 4-dimensional spaces are difficult to visualize, we illustrate each step by showing concrete loops.
Due to space constraints, we only describe the details relevant to the key effect of each step.
Complete working code can be found with the accompanying artifact~\cite{narmour_maximal_2024}.

\begin{figure*}[tb]
  \centering
\includegraphics[width=1\linewidth]{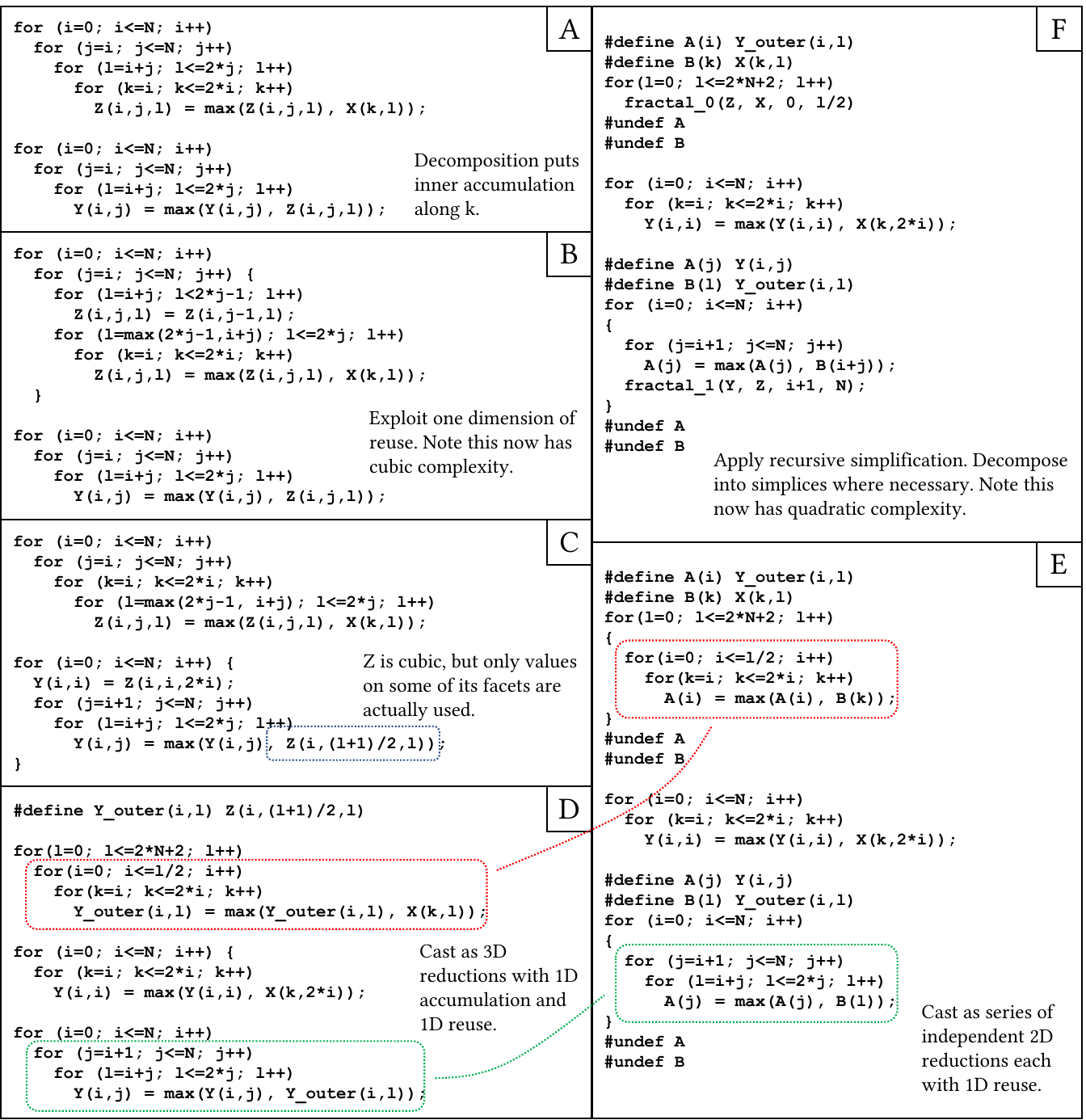}
%\vspace{-0.6cm}
\caption{Illustrating the $O(N^2)$ simplification procedure of the $O(N^4)$ reduction in Equation~\ref{eq:ex-3-alpha}.}
\label{fig:ex3-steps}
\end{figure*}

\textit{\textbf{Step A} - reduction decomposition to put the inner accumulation along k.}
The set of candidate single-step simplifications (step 3 of Algorithm~1) is empty.
The only option is to explore reduction decomposition choices.
Choosing a decomposition that puts the inner reduction along $k$ and storing the partial answer in a temporary intermediate variable $Z$ results in the loops shown in box A of Figure~\ref{fig:ex3-steps}.

\textit{\textbf{Step B} - simplification of the inner reduction.}
The set of candidate single-step simplifications on the inner reduction is now non-empty.
Applying single-step simplification using the reuse vector $\halfvec{\rho} = [0,1,0,0]$ results in the $O(N^3)$ loops shown in box B.

\textit{\textbf{Step C} - recover the residual reductions.}
The single-step simplification by $\halfvec{\rho} = [0,1,0,0]$ in this instance results in computing a 3-dimensional intermediate answer but GR06 describes and handles the post-processing required to read only the needed values in the outer reduction.
$Z$ is cubic, but only values on some of its facets are actually used.

\textit{\textbf{Step D} - preprocess to cast as series of independent reductions.}
The residual reductions now have 3-dimensional bodies with 1-dimensional accumulation and reuse, which corresponds to the fact that we are left with a series of 2D reductions embedded in a 3D space, as described in Section~\ref{sec:assumption-canonic-axes}.

\textit{\textbf{Step E} - cast as series of independent reductions.}
There are two residual 2D reductions embedded in a 3D space.
Both of these involve triangular domains that require the fractal simplification of Section~\ref{sec:2D-reductions (triangles)}.

\textit{\textbf{Step F} - Apply recursive fractal simplification.}
Finally, computing the inner 2D reductions recursively results in the loops shown in box F.
We don't explicitly show the definitions of \texttt{fractal\_0} and \texttt{fractal\_1} here, but they have the same structure as the code in Figure~\ref{fig:codegen}.
The complete concrete code for this example can be found in the accompanying artifact~\cite{narmour_maximal_2024}.

\section{Related Work} \label{sec:related-work}
%\todo{address reviewer C's "More detailed comparison with Yang et al. (what can be and cannot be simplified) is also missing." comment.}
Simplification has garnered renewed interest recently.  Asymptotic inefficiencies are present, even in deployed codes.  Ding and Shen~\cite{ding_glore_2017} noted that nine of the 30~benchmarks in Polybench~3.0 and two deployed PDE solvers have such inefficiencies. Separately, Yang et al.~\cite{yang_simplifying_2021} showed that simplification is helpful for many algorithms in statistical learning like Gibbs Sampling (GS), Metropolis Hasting (MH), and Likelihood Weighting (LW).  Their benchmarks include Gaussian Mixture Models (GMM), Latent Dirichlet Allocation (LDA), and Dirichlet Multinomial Mixtures (DMM).  See their paper for details of benchmarks, algorithms, size parameters, machine specs, etc.
They formulate and solve a more general problem: the reduction body may use (possibly transitively, through other variables) the output variable $Y$. So, it is necessary to solve the combined problem of simplification and scheduling.  They propose a simple heuristic that handles all the examples they encounter and only applies to reduction operators that admit an inverse.  Addressing these limitations, possibly using the techniques we present here, is an open problem.

Simplification is related but complementary to the problems of \emph{marginalization of product functions} (MPF)  and its discrete version,  \emph{tensor contraction} (TC).  Such problems arise in many domains.  MPF can be optimally computed using Pearl's ``summary passing'' or ``message passing'' algorithm~\cite{pearl_probabilistic_1988} for Bayesian inference, or the generalized distributive law~\cite{aji_generalized_2000, kschischang_factor_2001, shafer_probability_1990}.   Similarly, there is a long history of research on optimal implementations of TC~\cite{pfeifer_faster_2014,sabin_tensor_2021,kong_automatic_2023}.  In this problem, the sizes of the tensors in each dimension are known, and we seek the implementation with the minimum number of operations.

To explain the problem, first note that the cost of multiplying three matrices, $A$, $B$, and $C$, is affected by how associativity is exploited: if A and C are short-stout, and B is tall-thin, $(AB)C$ is better than $A(BC)$, and if A and C are tall-thin and B is short-stout, the latter parenthesization is.
Indeed, optimizing this for a sequence of matrices is a classic textbook problem used to illustrate dynamic programming. However, the underlying matrix multiplication uses the standard cubic algorithm. TC extends this to multiple chains of products (requiring us to optimally identify and exploit common subexpressions) and to tensors rather than matrices (exposing opportunities to exploit simplification by inserting new variables).

Specifically, consider the following system of equations:
\begin{align}
    X_{i,l} = \sum_{j,k=1}^{N}A_{i,j,l} \times B_{i,k}
    &&
    Y_j = \sum_{i,k,j=1}^{N}A_{i,j,l} \times B_{i,k}
\end{align}
Naively, each equation would have $O(N^{4})$ complexity since each result is the accumulation of a triple summation, and there are  $O(N)$ answers. However, if we define two new variables:
\begin{align}
    T_{i,l}  = \sum_{j=1}^{N}A_{i,j,l}
    &&
    T'_{i} = \sum_{k=1}^{N}B_{i,k}
\end{align}
then the following is an equivalent simplified system of equations whose complexity is only $O(N^{3})$:
\begin{align}
    X_{i,l} = T_{i,l} \times T'_{i}
    &&
    Y_j = \sum_{i,l=1}^{N}A_{i,j,l} \times T'_{i}
\end{align}

%The GDL algorithm efficiently evaluates such systems of equations, in particular, those that formulate a problem called , which is nothing but a set finite number of tensor contractions, using a set of common input tensors.

However, the reuse space, the accumulation space, and the domain constraints (loop bounds) are simple, and simple transformations like loop permutation can explore the space of all possible simplifications.
The space of choices may be combinatorially large but not infinite, as tackled by the GR06 algorithm.  There may be a need for more sophisticated) simplifications if the tensors have a special structure.
Nevertheless, the reduction operation here admits an inverse, and there is no need for our results in this paper.

\section{Conclusions and Open Questions} \label{sec:conclusion}

%We have shown that it is possible to attain all the benefits for reuse-based simplification of polyhedral reductions.

%Although it is difficult to compete with human ingenuity, 

The ultimate goal is to enable compilers to take a high-level application program specification and carry it out in the most efficient way possible, preferably \textit{automatically} and \textit{optimally}.
This work takes a step in that direction to enable users (i.e., application scientists and programmers) to focus less on the \textit{engineering} aspects (the \textit{how}) of their algorithms and more on the problems (the \textit{what}) that their algorithms are intended to solve.
In this work, we have studied reuse-based simplification of polyhedral reductions.
The simplification transformation proceeds recursively down the face lattice of the domain of the reduction body and attempts to exploit one dimension of available reuse at each level.
Previously, at some point along this traversal, it may not have been possible to employ the simplification transformation without requiring inverse operations.
However, as we have shown, it is always possible to split the residual reduction at the problematic node in the lattice in such a way that we can guarantee that the simplification transformation will not fail.
We have provided the mathematical proofs which enable us to make this claim.

We showed how to maximally simplify any arbitrary independent commutative reduction to obtain the optimal asymptotic complexity.
We provided a proof-of-concept implementation of the individual components of our approach as an accompanying software artifact~\cite{narmour_maximal_2024} that can be used to generate code.
Along these same lines, additional work concerning traditional compile-time scheduling and efficient code generation is still needed. 
For example, we rely on the fact that any polyhedral set can be decomposed into a union of disjoint simplices, but we do not discuss the implications of the choice of decomposition on the efficiency of the resulting code.
At this point, for example, it is not obvious why we should prefer one decomposition over another.
It may be the case that one particular simplicial decomposition scheme leads to code that is more amenable to vectorization.
These are interesting questions that require future exploration.

\begin{acks}
    The work of the first author was partially supported by the \grantsponsor{GS1}{National Science Foundation}{http://dx.doi.org/10.13039/100000001} under Grant No.~\grantnum{GS1}{2318970}.
\end{acks}

\bibliographystyle{ACM-Reference-Format}
\bibliography{references}

\end{document}